\theoremstyle{plain}
\newtheorem{theorem}{Theorem}%[section]
\newtheorem{definition}[theorem]{Definition}
\newtheorem{corollary}[theorem]{Corollary}%[section]
\newtheorem{lemma}[theorem]{Lemma}%[section]
\newcommand{\expec}[1]{ \textup{E}\left [ #1 \right ]}
\newcommand{\roundup}[1]{\left \lceil  #1 \right \rceil}
\newcommand{\expo}[1]{ \exp \left ( #1 \right )}
\newcommand{\size}[1]{\left | #1 \right |}
\newcommand{\paren}[1]{\left ( #1 \right )}
\DeclareMathOperator{\polylog}{polylog}
\DeclareMathOperator{\poly}{poly}
\DeclareMathOperator{\id}{{ID}}
\DeclareMathOperator{\cover}{{\mathcal{C}}}
\newcommand{\tO}{\tilde{O}}
\newcommand{\coverage}{\cover}
\newcommand{\prob}[1]{ \Pr \left [ #1 \right ]}
\renewcommand{\O}{{\cal O}}
\newcommand{\etal}{{et al.~}}
\newcommand{\eat}[1]{}
\newcommand{\W}{{\cal W}}
\renewcommand{\S}{{\cal S}}
\newcommand{\MkVC}{$\mathsf{MaxVertexCoverage}$\xspace}
\newcommand{\MkSC}{$\mathsf{MaxSetCoverage}$\xspace}
\newcommand{\Disj}{\mathsf{DISJ}\xspace}
\DeclareMathOperator{\opt}{OPT}
\title{Better Streaming Algorithms for the Maximum Coverage Problem}
\author{Andrew McGregor\thanks{This work was supported by NSF  Awards CCF-0953754, IIS-1251110,  CCF-1320719, and a Google Research Award.} \\ University of Massachusetts \\ mcgregor@cs.umass.edu  \and Hoa T. Vu\footnotemark[1] \\ University of Massachusetts \\ hvu@cs.umass.edu}
\date{}
\begin{document}

\begin{titlepage}
\maketitle

%!TEX root = main.tex

\begin{abstract} 
 We study the classic NP-Hard problem of finding the maximum $k$-set coverage in the data stream model: given a set system of $m$ sets that are subsets of a universe $\{1,\ldots,n \}$, find the $k$ sets that cover the most number of distinct elements. The problem can be approximated up to a factor $1-1/e$ in polynomial time. In the streaming-set model, the sets and their elements are revealed online. The main goal of our work is to design algorithms, with approximation guarantees as close as possible to $1-1/e$, that use sublinear space $o(mn)$. Our main results are:
\begin{itemize} 
\item Two $(1-1/e-\epsilon)$ approximation algorithms: One uses $O(\epsilon^{-1})$ passes and $\tO(\epsilon^{-2} k)$ space whereas the other uses only a single pass but $\tO(\epsilon^{-2} m)$ space. $\tO(\cdot)$ suppresses $\polylog$ factors.
\item We show that any approximation factor better than $(1-(1-1/k)^k)\approx 1-1/e$  in constant passes requires $\Omega(m)$ space for constant $k$ even if the algorithm is allowed unbounded processing time. We also demonstrate a  \emph{single-pass}, $(1-\epsilon)$ approximation algorithm using $\tO(\epsilon^{-2} m \cdot \min(k,\epsilon^{-1}))$ space.
\end{itemize} 
We also study the maximum $k$-vertex coverage problem in the dynamic graph stream model. In this model, the stream consists of edge insertions and deletions of a graph on $N$ vertices. The goal is to find $k$ vertices that cover the most number of distinct edges. 
\begin{itemize} 
\item We show that any constant approximation in constant passes requires $\Omega(N)$ space for constant $k$ whereas $\tO(\epsilon^{-2}N)$ space is sufficient for a $(1-\epsilon)$ approximation and arbitrary $k$ in a single pass.
\item For regular graphs, we show that $\tO(\epsilon^{-3}k)$ space is sufficient for a $(1-\epsilon)$ approximation in a single pass. We generalize this to a $(\kappa-\epsilon)$ approximation when the ratio between the minimum  and maximum degree is bounded below by $\kappa$. 
\end{itemize} 
\end{abstract} 
\end{titlepage}
%!TEX root = main.tex
%\pdfoutput=1
\section{Introduction}

The \emph{maximum set coverage problem} is a classic NP-Hard problem that has a wide range of applications including facility and sensor allocation \cite{KrauseG07}, information retrieval \cite{Anagnostopoulos15}, influence maximization in marketing strategy design \cite{KempeKT15}, and the blog monitoring problem where we want to choose a small number of blogs that cover a wide range of topics \cite{SahaG09}. In this problem, we are given a set system of $m$ sets that are subsets of a universe $[n]:=\{1,\ldots,n \}$. The goal is to find the $k$ sets whose union covers the largest number of distinct elements. For example, in the application considered by Saha and Getoor \cite{SahaG09}, the universe corresponds to $n$ topics of interest to a reader, each subset corresponds to a blog that covers some of these topics, and the goal is to maximize the number of topics that the reader learns about if she can only choose $k$ blogs.

It is well-known that the greedy algorithm, which greedily picks the set that covers the most number of uncovered elements, is a $1-1/e$ approximation. Furthermore, unless $P=NP$, this approximation factor is the best possible \cite{Feige98}. 

The \emph{maximum vertex coverage problem} is a special case of this problem in which the universe corresponds to the edges of a given graph and there is a set corresponding to each  node of the graph that contains  the edges that are incident to that node. For this problem, algorithms based on linear programming achieve a $3/4$ approximation for general graphs  \cite{AgeevS99} and a $8/9$ approximation for bipartite graphs \cite{CaskurluMPS14}.  Assuming $P\neq NP$, there does not exist a polynomial-time approximation scheme. Recent work has focused on finding purely combinatorial algorithms for this problem \cite{BonnetEPS16}. 
\paragraph{Streaming Algorithms.}
Unfortunately, for both problems, the aforementioned greedy and  linear programming algorithms scale poorly to massive data sets. This  has motivated a significant research effort in designing algorithms that could handle large data in modern computation models such as the data stream model and the MapReduce model \cite{KumarMVV15,BadanidiyuruMKK14}. In the data stream model, the $k$-set coverage problem and the related set cover problem have received a lot of attention in recent research \cite{Har-PeledIMV16,ChakrabartiW16,AusielloBGLP12, YuY13,EmekR14,AssadiKL16}.

Two variants of the data stream model are relevant to our work. In the \emph{streaming-set model} \cite{SahaG09,GuhaMT15,EmekR14,RadhakrishnanS11,Sun13,KoganK14}, the stream consists of $m$ sets $S_1,\ldots,S_m$ and each $S_i$ is encoded as the list of elements in that set along with a unique ID for the set. For simplicity, we assume that $\id(S_i)=i$. In the dynamic graph stream model \cite{AhnCGMW15,AhnGM12a,AhnGM12b,AhnGM13,KapralovLMMS14,KapralovW14,GuhaMT15,BhattacharyaHNT15,ChitnisCEHMMV16,AssadiKLY16,Konrad15,McGregorVV16,McGregorTVV15}, relevant to the maximum vertex coverage problem, the stream consists of insertions and deletions of edges of the underlying graph. For a recent survey of research in graph streaming, see \cite{McGregor14}.  Note that any algorithm for the dynamic graph stream model can also be used in the streaming-set model; the streaming-set model is simply a special case in which there is no deletion and edges are grouped by endpoint.

\subsection{Related Work}
\paragraph{Maximum Set Coverage.} 
Saha and Getoor \cite{SahaG09} gave a swap based $1/4$ approximation  algorithm that uses a single pass and $\tO(kn)$ space. At any point, their algorithm stores $k$ sets explicitly in the memory as the current solution. When a new set arrives, based on a specific rule, their algorithm either swaps it with the set with the least contribution in the current solution or does nothing and moves on to the next set in the stream. Subsequently, Ausiello \etal \cite{AusielloBGLP12} gave a slightly different swap based algorithm that also finds a $1/4$ approximation using one pass and the same space. Yu and Yuan \cite{YuY13} claimed an $\tO(n)$ space, single-pass algorithm with an approximation factor around 0.3 based on the aid of computer simulation.

Recently, Badanidiyuru \etal \cite{BadanidiyuruMKK14} gave a generic single-pass algorithm for maximizing a monotone submodular function on the stream's items subject to the cardinality constraint that at most $k$ objects are selected. Their algorithm guarantees a $1/2-\epsilon$ approximation. At a high level, based on a rule that is different from \cite{SahaG09,AusielloBGLP12} and a guess of the optimal value, their algorithm decides if the next item (which is a set in our case) is added to the current solution. The algorithm stops when it reaches the end of the stream or when $k$ items have been added to the solution. In the maximum  set coverage problem, the rule requires knowing the coverage of the current solution.  As a result, a careful adaptation to the maximum set coverage problem uses $\tO(\epsilon^{-1} n)$ space. For constant $\epsilon$, this result directly improves upon \cite{SahaG09,AusielloBGLP12}. Subsequently, Chekuri \etal \cite{ChekuriGQ15} extended this work to non-monotone submodular function maximization under
constraints beyond cardinality.

The set cover problem, which is closely related to the maximum set coverage problem, has been studied in \cite{SahaG09,Har-PeledIMV16,ChakrabartiW16,EmekR14,AssadiKL16}. See \cite{AssadiKL16} for a comprehensive summary of results and discussion.

\paragraph{Maximum Vertex Coverage.} 
The streaming maximum vertex coverage problem  was studied by Ausiello et al.~\cite{AusielloBGLP12}. They first observed that simply outputting the $k$ vertices with highest degrees is a 1/2 approximation; this can easily be done in the streaming-set model. The main results of their work were  $\tO(kN)$-space  algorithms that have better approximation for special types of graph. Their results include a 0.55 approximation for regular graphs and a 0.6075 approximation for regular bipartite graphs. Note that their paper only considered the streaming-set model whereas our results for maximum vertex coverage will consider the more challenging dynamic graph stream model.

\subsection{Our Contributions}
\paragraph{Maximum $k$-set coverage.} Our main goal is to achieve the $1-1/e$ approximation that is possible in the non-streaming or offline setting. 
\begin{itemize}
\item We present polynomial time data stream algorithms that achieve a $1-1/e-\epsilon$ approximation for arbitrarily small $\epsilon$. The first algorithm uses  one pass and $\tO(\epsilon^{-2} m)$ space whereas the second algorithm  uses $O(\epsilon^{-1})$ passes and $\tO(\epsilon^{-2} k)$ space. We consider both algorithms to be pass efficient but the second algorithm uses much less space at the cost of using more than one pass. We note  that storing the solution itself requires $\Omega(k)$ space. Thus, we consider $\tO(\epsilon^{-2} k)$ space to be surprisingly space efficient.
\item For constant $k$, we show that $\Omega(m)$ space is required by any constant pass (randomized) algorithm to achieve an approximation factor better than $1-(1-1/k)^k$ with probability at least 0.99; this holds even if the algorithm is permitted exponential time. To the best of our knowledge, this is the first non-trivial space lower bound for this problem. However, with exponential time and $\tO(\epsilon^{-2} m \cdot \min(k,\epsilon^{-1}))$ space we observe that a $1-\epsilon$ approximation is possible in a single pass.
\end{itemize}

For a slightly worse approximation, a $1/2-\epsilon$ approximation in  one pass can be achieved using $\tO(\epsilon^{-3} k)$ space. This follows by building on the result of Badanidiyuru \etal \cite{BadanidiyuruMKK14}. However, we provide a simpler algorithm and analysis. 

Our approach generalizes to the group cardinality constraint in which there are $\ell$ groups and only $k_i$ sets from group $i$ can be selected. This is also known as the partition matroid constraint. We give a $1/(\ell+1)-\epsilon$ approximation which improves upon \cite{ChakrabartiK14,ChekuriGQ15} for the case $\ell = 2$. Let $k = k_1 + \ldots + k_\ell$. If $O(\epsilon^{-1} \log (k/\epsilon))$ passes are permitted, then we could achieve a $1/2-\epsilon$ approximation by adapting the greedy analysis in  \cite{ChekuriK04} to our framework.

Finally, we design a $1/3-\epsilon$ approximation algorithm for the budgeted maximum set coverage problem using one pass and $\tO(\epsilon^{-1}(n+m))$ space. In this version, each set $S$ has a cost $w_S$ in the range $[0,L]$. The goal is to find a collection of sets whose total cost does not exceed $L$ that cover the most number of distinct elements. Khuller \etal \cite{KhullerMN99} presented a polynomial time and $1-1/e$ approximation algorithm based on the greedy algorithm and an enumeration technique. Our results are summarized in Figure \ref{fig:results1}.

Shortly after our original submission, in an independent work, Bateni \etal \cite{BateniEM16} also presented a polynomial-time, single-pass, $\tO(\epsilon^{-3} m)$ space algorithm that finds a $1-1/e-\epsilon$ approximation for the maximum $k$-set coverage problem. Furthermore, given unlimited post-processing time, their results also imply a $1-\epsilon$ approximation using  a single-pass and $\tO(\epsilon^{-3} m)$ space. This extension to $1-\epsilon$ approximation is also possible with our approach; see the end of \ref{sec:mspace} for details. We also note that our approach also works in their \emph{edge arrival} model in which the stream reveals the set-element relationships one at a time. 

Recently, Assadi proved a space lower bound $\Omega(\epsilon^{-2} m)$ for any $1-\epsilon$ approximation for constant $k$ \cite{Assadi17}.

\paragraph{Maximum $k$-vertex coverage.} Compared to the most relevant previous work \cite{AusielloBGLP12}, we study this problem in a more general model, i.e., the dynamic graph stream model. We manage to achieve a better approximation and space complexity for general graphs even when comparing to their results for special types of graph. Our results are summarized in Figure \ref{fig:results2}.
 In particular, we show that
\begin{itemize}
\item $\tO(\epsilon^{-2}N)$ space is sufficient for a $1-\epsilon$ approximation (or a $3/4-\epsilon$ approximation if  restricted to polynomial time) and arbitrary $k$ in a single pass. The algorithms in  \cite{AusielloBGLP12} use $\tO(kN)$ space and achieve an approximation worse than 0.61 even for special graphs.
\item Any constant approximation in constant passes requires $\Omega(N)$ space for constant $k$.
\item For regular graphs, we show that $\tO(\epsilon^{-3}k)$ space is sufficient for $1-\epsilon$ approximation in a single pass. We generalize this to an $\kappa -\epsilon$ approximation when the ratio between the minimum and maximum degree is bounded below by $\kappa$. We also extend this result to hypergraphs.
\end{itemize}

\begin{figure}[t]
\centering
\begin{tabular}{ c c c c c  c } 
\hline\noalign{\smallskip}
Theorem & Bound & No. of passes & Space & Approximation & Constraint  \\
\noalign{\smallskip}\hline\noalign{\smallskip}
% \noalign{\global\arrayrulewidth0.4pt}
%& & & & \\
 \ref{thm:key-result-1} & U & $O\left( \epsilon^{-1} \right)$ & $\tO(\epsilon^{-2}{k}{})$ & $1-1/e -\epsilon$ & C  \\
%\hline
 \ref{thm:key-result-3} & U & 1 & $\tO(\epsilon^{-3}{k}{})$ & $1/2 - \epsilon$ & C \\
%\hline
 \ref{thm:key-result-2} & U & 1 & $\tO(\epsilon^{-2}{m}{})$ & $1-1/e-\epsilon$ & C \\
%\hline
 \ref{thm:key-result-4}, \ref{thm:key-result-5} & U &1 & $\tO(\epsilon^{-2}{m \cdot \min(k, \epsilon^{-1}}))$ & $1 - \epsilon$ & C \\
%\hline
%\hline 
 \ref{thm:lower-bound-1} & L & Constant & $\Omega({m}{k^{-2}})$ & $1-(1-1/k)^k+\epsilon$ & C \\
 \ref{thm:sub-result-1} & U & 1 & $\tO(\epsilon^{-3}k)$ & $1/(\ell+1)-\epsilon$ & G \\
 \ref{thm:sub-result-2} & U & $O(\epsilon^{-1} \log (k/\epsilon))$ & $\tO(\epsilon^{-2}k)$ & $1/2-\epsilon$ & G \\
  \ref{theorem:budget} & U& 1 & $\tO(\epsilon^{-1}{(n+m)}{})$ & $1/3-\epsilon$ & B \\
\noalign{\smallskip}\hline
\end{tabular}
\centering
\caption{Summary of results for \MkSC,    
U: upper bound, L: lower bound, C: cardinality, B: budget, G: group cardinality (partition matroid)  }\label{fig:results1}
\end{figure}

\begin{figure}[t]
\centering
\begin{tabular}{c c c c c} 
\hline\noalign{\smallskip}
Theorem & Bound  & No. of passes & Space & Approximation   \\
\noalign{\smallskip}\hline\noalign{\smallskip}
 \ref{thm:vertex-cover-general-2} & U & 1 & $\tO(\epsilon^{-2}{N}{})$ & $1-\epsilon$   \\
 \ref{thm:vertex-cover-bounded-2} & U & 1 & $\tO(\epsilon^{-3}{k}{})$ & $\kappa -\epsilon$    \\
 \ref{thm:lower-bound-2}& L & 1 & $\Omega(N \kappa^3/k)$ & $\kappa + \epsilon$    \\
\noalign{\smallskip}\hline
\end{tabular}
\caption{Summary of results for \MkVC, U: upper bound, L: lower bound, $\kappa$ is ratio of lowest degree to highest degree.}
 \label{fig:results2}
\end{figure}

\paragraph{Our techniques.} 
On the algorithmic side, our basic approach is a  ``guess, subsample, and verify'' framework. At a high level, suppose we design a streaming algorithm for approximating the maximum  $k$-set coverage that assumes a priori knowledge of a good guess of the optimal coverage. We show that it is a) possible to run same algorithm on a subsampled universe  defined by a carefully chosen hash function and b) remove the assumption that a good guess was already known.

If the guess is at least nearly correct, running the algorithm on the subsampled universe results in a small space complexity. However, there are two main challenges. First, an algorithm instance with a wrong guess could use too much space. We simply terminate those instances. The second issue is more subtle. Because the hash function is not fully independent, we appeal to a special version of the Chernoff bound. The bound needs not guarantee a good approximation unless the guess is near-correct. To this end, we use the $F_0$ estimation algorithm to verify the coverage of the solutions. Finally, we return the solution with maximum estimate coverage. This framework allows us to restrict the analysis solely to the near-correct guess. The analysis is, therefore, significantly simpler.

Some of our other algorithmic ideas are inspired by previous works. The ``thresholding greedy'' technique was inspired by \cite{ChakrabartiW16,CormodeKW10,BadanidiyuruV14}. However, the analysis is different for our problem. Furthermore, to optimize the number of passes, we rely on new observations.

Another algorithmic idea in designing one-pass space-efficient algorithm is to treat the sets differently based on their contributions. During the stream, we immediately add the sets with large contributions to the solution. We store the contribution of each remaining sets explicitly and solve the remaining problem offline. Har-Peled \etal \cite{Har-PeledIMV16} devised a somewhat similar strategy but the details are  different.  

For the maximum $k$-vertex coverage problem, we show that simply running the streaming cut-sparsifier algorithm is sufficient and optimal up to a polylog factor. The novelty is to treat it as an interesting corner case of a more space-efficient algorithm for near regular graphs, i.e., $\kappa$ is bounded below.

One of the novelties is  proving the lower bound via a randomized reduction from the $k$-party set disjointness problem. 

\paragraph{Comparison to the conference publication.} This is an extended and revised version of a preliminary version in ICDT 2017 \cite{V17}. In this version, we present the single-pass, $\tO(\epsilon^{-3}m)$ space and $1-\epsilon$ approximation algorithm that was briefly mentioned in \cite{V17}. Furthermore, we provide two new algorithms for the maximum set coverage problem under the group cardinality constraint in Section \ref{sec:group-cardinality}.

 %!TEX root = main.tex

\section{Algorithms for maximum $k$-set coverage}
%\subsection{Preliminaries }

In this section, we design various algorithms for approximating \MkSC in the data stream model. Our main algorithmic results in this section are two $1-1/e-\epsilon$ approximation algorithms. The first algorithm uses  one pass and $\tO(\epsilon^{-2} m)$ space whereas the second algorithm  uses $O(\epsilon^{-1})$ passes and $\tO(\epsilon^{-2} k)$ space.  We also briefly explore some other trade-offs in a subsequent subsection.
 
\paragraph{Notation.}
If $\mathcal{A}$ is a collection of sets, then $\cover(\mathcal{A})$ denotes the union of these sets.

\subsection{$(1-1/e-\epsilon)$ approximation in one pass and $\tO(\epsilon^{-2}m)$ space} 
\label{sec:mspace}

\paragraph{Approach.} 
The algorithm  adds sets to the current solution if the number of new elements they cover exceeds some threshold. The basic algorithm relies on an estimate $z$ of the optimal coverage $\opt$. The threshold for including a new set in the solution is that it covers at least $z/k$ new elements. Unfortunately, this threshold is too high to ensure that we selected sets that achieve the required $1-1/e-\epsilon$ approximation  and we may want to revisit adding a set, say $S$, that was not added when it first arrived. To facilitate this, we will explicitly store the subset of $S$ that were uncovered when $S$ arrived in a collection of sets $\mathcal{W}$. Because $S$ was not added immediately, we know that this subset is not too large. At the end of the pass, we continue augmenting out current solutions using the collection $\mathcal{W}$.

 \paragraph{Technical Details.} 
For the time being, we suppose that the algorithm is provided with an estimate $z$ such that $\opt \leq z \leq 4 \opt$. We will later remove this assumption. The algorithm uses $C$ to keep track of the elements that have been covered so far. Upon seeing a new set $S$, the algorithm stores $S \setminus C$ explicitly in $\mathcal{W}$ if $S$ covers few new elements. Otherwise, the algorithm adds $S$ to the solution and updates $C$ immediately. At the end of the stream, if there are fewer than $k$ sets in the solution, we use the greedy approach to find the remaining sets from  $\mathcal{W}$ .

The basic algorithm maintains $I\subseteq [m]$, $C\subseteq [n]$ where $I$ corresponds to the ID's of the (at most $k$) sets in the current solution and $C$ is the the union of the corresponding sets. We also maintain a collection of sets $\W$ described above. %where each set corresponds to the uncovered elements on a set that was initially included in the cover.
  The algorithm proceeds as follows:

\begin{enumerate}
\item Initialize $C=\emptyset$, $I=\emptyset$, $\W=\emptyset$.
\item For each set $S$ in the stream: 
\begin{enumerate}
\item If $|S\setminus C| <  {z/k}$ then $\W \leftarrow \W \cup \{S\setminus C\}$.
\item If $|S\setminus C| \geq  {z/k}$ and $\size{I} < k$, then $I\leftarrow I\cup \{ID(S)\}$ and $C\leftarrow C\cup S$.
\end{enumerate}
\item Post-processing: Greedily add $k-|I|$ sets from $\W$ and update $I$ and $C$ appropriately.
\end{enumerate}

\begin{lemma}\label{lem:core-2}
There exists a single-pass, $O\left(k\log m +{mz}/{k} \cdot\log n \right)$-space algorithm that finds a $1-1/e$ approximation of \MkSC. 
\end{lemma}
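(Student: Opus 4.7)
The plan is to verify the space bound and the approximation ratio separately. The space bound is immediate from the algorithm description: $I$ stores at most $k$ set identifiers costing $O(k\log m)$ bits, and every subset placed into $\W$ has size strictly less than $z/k$ by the diversion rule, so with at most $m$ such subsets and $O(\log n)$ bits per element $\W$ occupies $O(mz/k\cdot\log n)$ bits total.

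For the approximation guarantee, the plan is to reduce to the standard greedy recurrence. Let $T_1,\ldots,T_k$ denote the sets selected in order (first those added during the pass, then those chosen by greedy post-processing), and set $C_t=\bigcup_{s\leq t}T_s$. It suffices to verify
$$|T_t\setminus C_{t-1}|\geq(\opt-|C_{t-1}|)/k$$
at every step, since the classical induction then yields $|C_k|\geq(1-(1-1/k)^k)\opt\geq(1-1/e)\opt$. For any $T_t$ added during the pass, the selection rule directly gives $|T_t\setminus C_{t-1}|\geq z/k\geq\opt/k$, which dominates $(\opt-|C_{t-1}|)/k$ because $z\geq\opt$.

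For $T_t$ selected in post-processing, the key observation I would use is that the stored piece $S\setminus C_{\mathrm{arr}(S)}$ has current marginal contribution exactly $|S\setminus C_{t-1}|$: since $C$ only grows, $C_{\mathrm{arr}(S)}\subseteq C_{t-1}$, so subtracting $C_{t-1}$ from the stored piece is the same as subtracting it from $S$ itself. Writing $\opt=\{O_1,\ldots,O_k\}$, each $O_i$ is either in $I$ (contributing nothing to the marginal since $O_i\subseteq C_{t-1}$) or is represented in $\W$ by a stored piece; hence
$$\sum_{O_i\in\W}|O_i\setminus C_{t-1}|\geq|\cover(\opt)\setminus C_{t-1}|\geq\opt-|C_{t-1}|,$$
and averaging over at most $k$ terms produces a candidate in $\W$ with marginal at least $(\opt-|C_{t-1}|)/k$, which the greedy step matches or beats.

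The main step requiring care is the identification of the stored subset's marginal with the full set's marginal against the current coverage; this is what legitimizes treating post-processing as classical greedy over a surrogate collection that still contains the ``good'' elements of OPT. Once that is settled, the pass-time and post-processing cases combine into the textbook greedy argument to yield the claimed $1-1/e$ approximation.
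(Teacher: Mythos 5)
Your proof is correct and follows essentially the same route as the paper's: the threshold rule gives the greedy recurrence for sets added during the pass, and the post-processing greedy over $\W$ continues it because each stored piece $S\setminus C_{\mathrm{arr}(S)}$ has the same current marginal as $S$ itself --- a step the paper leaves implicit when it asserts that before the $(i+1)$th post-processing pick some set covers $c_i/k$ new elements, and which you rightly identify as the crux. The only cosmetic differences are that you verify the per-step inequality uniformly in both phases rather than bounding the stream-phase coverage in aggregate, and you omit the $O(z\log n)$ bits for storing $C$ explicitly, which is harmless since that term is dominated by the $O(mz/k\cdot\log n)$ cost of $\W$ for $k\le m$.
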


\begin{proof}
We observe that storing the set of covered elements $C$  requires at most $\opt \log n= O( z \log n)$ bits of space. For each set $S$ such that ${S}\setminus C$ is stored explicitly in $\mathcal{W}$, we need $O\left( z/k \cdot  \log n \right)$ bits of space. Storing $I$ requires $O(k\log m)$ space.
Thus, the  algorithm uses the space as claimed.% since $k<m$.

After the algorithm added the $i$th set $S$ to the solution, let $a_i$ be the number of new elements that $S$ covers and $b_i$ be the total number of covered elements so far. Furthermore, for $i>0$, let $c_i = \opt - b_i$. Define $a_0 := b_0 := 0$ and $c_0 := \opt$. 

At the end of the stream, suppose $\size{I} = j$. Then, 
\[
c_j \leq  \opt-\frac{z \cdot j}{k}\leq \opt \paren{1-\frac{j}{k}}\leq \opt \paren{1-\frac{1}{k}}^j~.\] 
The last inequality holds when $k \geq 2$ and $j$ is a non-negative integer. The case $k=1$ is trivial since we can simply find the largest set in $\tO(1)$ space. 

Now, we consider the sets that were added  in post-processing. We then proceed with the usual inductive argument to show that $c_i \leq (1-1/k)^i \opt$ for $i>j$. Before the algorithm added the $(i+1)$th set for $j \leq i \leq k-1$, there must be a set that covers at least ${c_i}/{k} $ new elements. Therefore, 
\[c_{i+1}=c_i - a_{i+1} \leq c_i \paren{ 1-\frac{1}{k} } \leq \opt \paren{1-\frac{1}{k}}^{i+1} \ .\] 
The approximation follows since $c_k \leq \opt(1-1/k)^{k} \leq 1/e \cdot \opt$.
\end{proof}

Following the approach outlined in Section \ref{sec:magic} we may assume $z=O(\epsilon^{-2} k \log m)$ and that $\opt \leq z\leq 4\opt.$

\begin{theorem}\label{thm:key-result-2}
There exists a single-pass, $\tO(\epsilon^{-2} m)$ space algorithm that finds a $1-1/e-\epsilon$ approximation of  \MkSC with high probability.
\end{theorem}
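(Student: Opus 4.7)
The plan is to invoke Lemma~\ref{lem:core-2} with $z = \Theta(\epsilon^{-2}k\log m)$, which immediately gives the target space bound $O(k\log m + (m/k)\cdot z \log n) = \tO(\epsilon^{-2}m)$. The catch is that Lemma~\ref{lem:core-2} requires $\opt \le z \le 4\opt$, which fails on the original instance whenever $\opt \gg \epsilon^{-2}k\log m$. I therefore reduce the effective value of $\opt$ via subsampling, and I handle the fact that $\opt$ is unknown by running geometric guesses in parallel and verifying the answer at the end. This is the ``guess, subsample, and verify'' strategy described in Section~\ref{sec:magic}.

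For the subsampling, assume first that we have a guess $v$ with $\opt \le v \le 2\,\opt$. Draw a hash function $h:[n]\to\{0,1\}$ from a $t$-wise independent family with $t=\Theta(\log m)$ that marks each element independently with probability $p := z/v$, and feed the algorithm of Lemma~\ref{lem:core-2} the derived stream $(S_i\cap h^{-1}(1))_{i\le m}$ with threshold $z$. The subsampled optimum is concentrated around $p\,\opt\in[z/2,\,z]$, so the precondition of the lemma holds with high probability and its output is a $(1-1/e)$-approximation of the subsampled optimum. To push this guarantee back to the original universe, I apply a Chernoff-type tail bound for $t$-wise independent sums to $|h(\cover(\mathcal{A}))|$ for each fixed $k$-subset $\mathcal{A}$ of sets: whenever $|\cover(\mathcal{A})| = \Omega(\opt)$, the rescaled count $p^{-1}|h(\cover(\mathcal{A}))|$ is a $(1\pm\epsilon/4)$-approximation of $|\cover(\mathcal{A})|$ with probability $1-m^{-\Omega(k)}$. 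A union bound over all $\binom{m}{k}$ candidate collections then approximates both the true optimum and the algorithm's output simultaneously, so the $(1-1/e)$-approximation on the subsampled instance transfers to a $(1-1/e-\epsilon)$-approximation on the original instance.

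To dispense with the guess, run $O(\log n)$ copies in parallel, one for each $v\in\{1,2,4,\ldots,n\}$, aborting any copy whose internal state ever exceeds the $\tO(\epsilon^{-2}m)$ space budget. Since $\opt \le v \le 2\opt$ holds for at least one $v$, the corresponding copy survives. In parallel I also maintain an independent $\tO(\epsilon^{-2})$-space $F_0$-sketch over the stream (indexed by set ID) so that, at the end of the stream, the true coverage of each surviving candidate can be estimated to a $(1\pm\epsilon)$ factor; I output the surviving candidate whose estimated coverage is largest.

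The main obstacle is the limited-independence Chernoff step: the hash family must be cheap to store, so that all $O(\log n)$ parallel copies together fit in $\tO(\epsilon^{-2}m)$ space, yet concentrated enough to support a union bound over the $\binom{m}{k}$ candidate outputs. This is what forces the $\log m$ factor inside $z$ and, consequently, the final $\epsilon^{-2}$ dependence. Once this concentration step is set up, the rest of the argument is bookkeeping: combining the approximation from Lemma~\ref{lem:core-2} on the subsampled universe with the two multiplicative $(1\pm\epsilon)$ slacks from subsampling and $F_0$ estimation yields the claimed $1-1/e-\epsilon$ ratio after a rescaling of $\epsilon$ by a constant.
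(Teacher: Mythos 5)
Your proposal follows the paper's own route essentially step for step: run the thresholding algorithm of Lemma~\ref{lem:core-2} on a subsampled universe with $z=\Theta(\epsilon^{-2}k\log m)$, make geometrically spaced guesses of $\opt$ in parallel, terminate any instantiation that exceeds its space budget, and pick the best surviving solution using $F_0$ sketches of the covered elements. The one parameter that would not survive scrutiny is the independence of the hash family. With only $t$-wise independence for $t=\Theta(\log m)$, a sum of indicators with mean $\mu=\Theta(\epsilon^{-2}k\log m)$ cannot be shown to concentrate within a $(1\pm\epsilon/4)$ factor with failure probability $m^{-\Omega(k)}$: moment bounds for $t$-wise independent sums give roughly $\bigl(O(t/(\epsilon^{2}\mu))\bigr)^{t/2}=m^{-O(\log k)}$, which does not absorb a union bound over $\binom{m}{k}=m^{\Theta(k)}$ candidate collections once $k$ is superconstant. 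The paper instead draws $h$ from a $2\lambda$-wise independent family with $\lambda=c\epsilon^{-2}k\log m$, which is what the limited-independence Chernoff bound of Schmidt et al.\ (Theorem~\ref{thm:chernoff-hashing}) requires to deliver a per-collection failure probability of $\exp(-\Omega(\epsilon\lambda))\le m^{-10k}$; storing such a function still costs only $\tO(\epsilon^{-2}k)$ bits, so the overall $\tO(\epsilon^{-2}m)$ space bound is unaffected. With the independence raised to $\Theta(\epsilon^{-2}k\log m)$ (or anything at least $\lceil\epsilon\lambda\rceil$), your argument is exactly the paper's.
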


\paragraph{Better approximation using more space and unlimited post-processing time.} We observe that a slight modification of the above algorithm can be used to attain a $1-1/(4b)$ approximation for any $b>1$ if we are permitted unlimited post-processing time and an  extra factor of $b$ in the space use. Specifically, we increase the threshold for when to add a set immediately to the solution from $z/k$ to $bz/k$ and then find the optimal collection of $k-|I|$ sets from $\W$ to add in post-processing.  It is immediate that this algorithm uses $O(k\log m + mbz/k \cdot \log n)$ space. 

Suppose a collection of $y$ sets $\S_1$ were added during the stream. These $y$ sets  cover 
\[|\cover(\S_1)|  \geq y \cdot \frac{bz}{k} \geq \opt \cdot \frac{yb}{k}  \] elements. On the other hand, the collection of sets $\S_2$ selected in post-processing covers at least $\frac{k-y}{k} \cdot  \left( \opt - |\cover(\S_1)|  \right)$ new elements. Then,
\begin{align*}
\size{\cover(\S_1\cup \S_2) }
& \geq |\cover(\S_1)| + \frac{k-y}{k} \cdot  \left( \opt -| \cover(\S_1)|  \right) \\
& = \paren{1-\frac{y}{k}} \opt + \frac{y}{k} \cdot |\cover(\S_1) | \\
& \geq  \paren{1-\frac{y}{k}} \opt + \frac{y}{k} \cdot \opt \cdot \frac{yb}{k}   \\
& = \opt \left( 1 - \frac{y}{k} + \left( \frac{y}{k}\right)^2 b\right) \\
& \geq \opt \left( 1-\frac{1}{4b} \right) 
\end{align*}  
where the last inequality follows by minimizing over $y$.
Hence, we obtain a $1-\epsilon$ approximation by setting $b = 4/\epsilon$.

\begin{theorem}\label{thm:key-result-4}
There exists a single-pass, $\tO(\epsilon^{-3} m)$ space algorithm that finds a $1-\epsilon$ approximation of  \MkSC with high probability.
\end{theorem}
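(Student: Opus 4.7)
The plan is to reuse the thresholded greedy scheme of Lemma~\ref{lem:core-2} with two changes. First, I raise the immediate-inclusion threshold from $z/k$ to $bz/k$ for a parameter $b$ to be set to $\Theta(1/\epsilon)$; any set whose residual $S\setminus C$ has size below this new threshold gets filed into $\W$ exactly as before. Second, in the post-processing step I replace the greedy pick with brute-force enumeration that selects the \emph{optimal} $k-|I|$-subfamily of $\W$, which is legal because we are permitted unlimited post-processing time. As in Theorem~\ref{thm:key-result-2}, I wrap everything in the ``guess, subsample, and verify'' framework of Section~\ref{sec:magic} so that I may assume $\opt\leq z\leq 4\opt$ with $z=O(\epsilon^{-2}k\log m)$ at only a polylogarithmic multiplicative cost in space.

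The space accounting carries over verbatim from Lemma~\ref{lem:core-2}: each residual stored in $\W$ has size less than $bz/k$, so $\W$ costs $O(mbz/k\cdot\log n)$ bits, while $I$ and $C$ cost $O(k\log m)$ and $O(z\log n)$ respectively. Substituting $b=\Theta(1/\epsilon)$ and $z=O(\epsilon^{-2}k\log m)$ collapses the dominant term to $\tO(\epsilon^{-3}m)$.

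For the approximation, let $\S_1$ denote the $y$ sets committed during the pass and $\S_2$ the $k-y$ sets chosen offline from $\W$. Each $S\in\S_1$ was added only because $|S\setminus C|\geq bz/k\geq b\opt/k$ at its arrival, so these contributions are disjoint and $|\cover(\S_1)|\geq yb\opt/k$. For $\S_2$, the key structural observation is that every set $T$ of an optimal solution $\opt^*$ either lies in $\S_1$ or, since $|I|<k$ throughout (because $|I|$ ends at $y<k$), satisfies $|T\setminus C|<bz/k$ at arrival and thus has its arrival-time residual stored in $\W$; because $C$ is monotone increasing, that stored residual contains every element of $T$ not already covered by $\S_1$. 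Applying the standard random-subset averaging argument to these $k$ available residuals shows that some $(k-y)$-subfamily of $\W$ covers at least $\tfrac{k-y}{k}(\opt-|\cover(\S_1)|)$ new elements, which the optimal enumeration matches. Adding the two contributions and minimizing $\opt\bigl(1-\tfrac{y}{k}+b(\tfrac{y}{k})^2\bigr)$ over $y$ yields total coverage at least $\opt(1-1/(4b))$, and setting $b=1/(4\epsilon)$ delivers the claimed $1-\epsilon$ approximation.

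The main delicate point I anticipate is the residual-availability argument for $\S_2$: I must be sure that when I restrict attention to $\W$ in post-processing I have not discarded any uncovered portion of the true optimum, and that the best $(k-y)$-subfamily of stored residuals is at least as good as the best $(k-y)$ subset of $\opt^*$ restricted to the uncovered universe. Everything else is a direct composition of Lemma~\ref{lem:core-2} with the guess-and-verify wrapper of Section~\ref{sec:magic} and a single convex-minimization calculation.
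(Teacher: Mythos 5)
Your proposal matches the paper's proof essentially step for step: raise the threshold to $bz/k$, store rejected residuals in $\W$, enumerate the optimal $(k-|I|)$-subfamily in post-processing, and minimize $\opt\paren{1-\tfrac{y}{k}+b(\tfrac{y}{k})^2}$ over $y$ to get $1-1/(4b)$, all wrapped in the Section~\ref{sec:magic} framework. Your residual-availability and random-subset averaging argument correctly fills in the one step the paper states without proof (that $\S_2$ covers at least $\tfrac{k-y}{k}(\opt-|\cover(\S_1)|)$ new elements), so there is nothing to fix.
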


\subsection{$(1-1/e-\epsilon)$ approximation in $O(\epsilon^{-1})$ \label{sec:thesholding-greedy} passes and $\tO(\epsilon^{-2} k)$ space} 

\label{sec:multiple-pass}

\paragraph{Approach.}
Our second algorithm is based on the standard greedy approach but instead of adding the set that increases the coverage of the current solution  the most at each set, we add a set if the number of new elements covered by this set exceeds a certain threshold. This threshold decreases with each pass in such a way that after only $O(\epsilon^{-1})$ passes, we have a good approximate solution but the resulting algorithm may use too much space.  
We will fix this by first randomly subsampling each set at different rates and running multiple instantiations of the basic algorithm corresponding to different rates of subsampling.

The basic ``decreasing threshold'' approach has been used before in different contexts  \cite{BadanidiyuruV14,ChakrabartiW16,CormodeKW10}. The novelty of our approach is in implementing this approach such that the resulting algorithm uses small space and a small number of passes. For example, a direct implementation of the approach by Badanidiyuru and  Vondr\'ak \cite{BadanidiyuruV14} in the streaming model may require $O(\epsilon^{-1} \log (m/\epsilon))$ passes and $O(n)$ space\footnote{Note that their work addressed the more general problem of maximizing sub-modular functions.}.

\paragraph{Technical Details.}
We will assume that we are given an estimate $z$ of $\opt$ such that $\opt\leq z\leq 4\opt$. We start by designing a  $(1-1/e-\epsilon$) approximation algorithm that uses $\tO(k+z)$ space and $O({\epsilon^{-1}})$ passes. 
We will subsequently use a sampling approach to reduce the space to $\tO(\epsilon^{-2}k)$.

As with the previous algorithm, the basic algorithm in this section also  maintains $I\subseteq [m]$, $C\subseteq [n]$ where $I$ corresponds to the ID's of the (at most $k$) sets in the current solution and $C$ is the the union of the corresponding sets. The algorithm proceeds as follows:

\begin{enumerate}
\item Initialize $C=\emptyset$ and $I=\emptyset$
\item For $j=1$ to $1+\roundup{\log_{\alpha} (4e)}$ where $\alpha=1+\epsilon$: 
\begin{enumerate}
\item Make a pass over the stream. For each set $S$ in the stream: If $|I|<k$ and 
\[
|S\setminus C| \geq \frac{z}{k(1+\epsilon)^{j-1}}~,
\] 
then $I\leftarrow I\cup \{ID(S)\} ~~\mbox{ and }~~ C\leftarrow C\cup S \ . $ 
\end{enumerate}
\end{enumerate}

\begin{lemma}\label{lem:core-1}
There exists an  $O(\epsilon^{-1})$-pass, $O(k \log m+z \log n)$-space algorithm  that finds a $1- 1/e-\epsilon$ approximation of \MkSC. 
\end{lemma}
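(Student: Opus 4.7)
The plan is to verify three properties of the algorithm. First, the pass count: $J = 1 + \lceil \log_{1+\epsilon}(4e) \rceil = O(\epsilon^{-1})$ since $\log(1+\epsilon) = \Theta(\epsilon)$. Second, the space bound: $C$ holds at most $|C| \le \opt \le z$ elements so fits in $O(z \log n)$ bits, and $I$ fits in $O(k \log m)$ bits. These two are routine. The substance is the approximation guarantee, which I'd establish by splitting on whether the algorithm terminates with $|I| < k$ or with $|I| = k$.

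In the case $|I| < k$ at termination, the argument is direct. Every optimal set $S^* \notin I$ must have been examined during the last pass $J$ and rejected, so $|S^* \setminus C| < z/(k(1+\epsilon)^{J-1})$ at the time of rejection. Since $C$ only grows over time, this bound transfers to $|S^* \setminus C_J|$. Summing over the at-most-$k$ optimal sets outside $I$ bounds the uncovered part of the optimum by $z/(1+\epsilon)^{J-1} \le z/(4e) \le \opt/e$, which already yields a $(1-1/e)$ approximation.

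In the case $|I| = k$, I'd adapt the classical greedy inductive argument. The key claim to prove is: whenever the $(i+1)$-th set is added in pass $t$, it covers at least $c_i/(k(1+\epsilon))$ new elements, where $c_i = \opt - b_i$. For $t = 1$, this follows from the threshold $z/k$ together with $z \ge \opt \ge c_i$. For $t \ge 2$, I would apply the previous case's argument to the state at the end of pass $t-1$ (where $|I| < k$, since more is added in pass $t$) to conclude $c_i \le z/(1+\epsilon)^{t-2}$, then compare with the current threshold $z/(k(1+\epsilon)^{t-1})$, giving the factor of $(1+\epsilon)$ slack. Iterating the recurrence $c_{i+1} \le c_i(1 - 1/(k(1+\epsilon)))$ for $i=0,\dots,k-1$ gives $c_k \le \opt \cdot \exp(-1/(1+\epsilon)) \le (1/e + O(\epsilon)) \opt$, which after rescaling $\epsilon$ is the desired $(1-1/e-\epsilon)$ bound.

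The main obstacle is the across-pass step in the ``hard'' case: the bound on $c_i$ at the moment the $(i+1)$-th set is added must be inferred from what the algorithm did (or refused to do) during an earlier pass, while arbitrarily many other sets may have been added in between. The key fact that rescues this is the monotonicity of $C$, which ensures that a bound of the form ``this optimal set's uncovered portion was small at some earlier time'' can only improve at a later time; carefully tracking this is the delicate part of writing up the proof.
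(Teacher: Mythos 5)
Your proof is correct and follows essentially the same route as the paper's: the same case split on whether $k$ sets are eventually selected, the same key claim that the $(i+1)$-th selected set covers at least $c_i/(k(1+\epsilon))$ new elements, and the same recurrence and terminal-case calculation. The only cosmetic difference is that the paper derives the needed bound $c_i < z/(1+\epsilon)^{t-2}$ from a single witness set (by averaging, some optimal set covers at least $c_i/k$ still-uncovered elements, yet was rejected in pass $t-1$), whereas you sum the rejection bounds over all unpicked optimal sets at the end of pass $t-1$; both are valid and yield the same constant.
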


To analyze the algorithm, we introduce some notation. After the $i$th set was picked, let $a_i$ be the number of new elements  covered by this set and let $b_i$ be the total number of covered elements so far. Furthermore, let $c_i = \opt -b_i$. We define $a_0:=0$ and $b_0:=0$. 
\begin{lemma}\label{lem:3}
Suppose the algorithm picks $k'$ sets. For $0\leq i\leq k'-1$,  we have $a_{i+1} \geq {c_i}/({\alpha k})$.
\end{lemma}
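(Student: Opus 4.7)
\medskip

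\textbf{Proof plan for Lemma \ref{lem:3}.} My plan is to bound $a_{i+1}$ by first identifying an OPT set that still contributes many new elements at the time just before the $(i+1)$st pick, and then using the fact that the thresholds in consecutive passes differ only by a factor of $\alpha=1+\epsilon$. The trivial case $c_i=0$ makes the inequality automatic, so I assume $c_i>0$ throughout.

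First I would fix notation. Let $\mathcal{O}=\{S_1^\star,\dots,S_k^\star\}$ be an optimal solution and let $C_i$ denote the value of $C$ right after the $i$th pick (so $|C_i|=b_i$ and $|\cover(\mathcal O)|-|C_i|\geq c_i$). Since $|\cover(\mathcal O)\setminus C_i|\ge c_i$, a simple averaging argument gives an OPT set $S^\star\in\mathcal O$ with $|S^\star\setminus C_i|\ge c_i/k$. Note that $S^\star$ cannot already be among the $i$ picked sets, because then we would have $S^\star\subseteq C_i$ and the lower bound $c_i/k>0$ would be contradicted.

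Next, suppose the $(i+1)$st set was picked during pass $j$, so the threshold at the moment of the pick is $\tau_j := z/(k\alpha^{j-1})$ and therefore $a_{i+1}\ge\tau_j$. I split on $j$. If $j=1$, I use $z\ge \opt \ge c_i$ to get $a_{i+1}\ge z/k \ge c_i/k \ge c_i/(\alpha k)$, which is what we want. If $j\ge 2$, I look at what happened when $S^\star$ was processed during the earlier pass $j-1$ with threshold $\tau_{j-1}=\alpha\tau_j$. Let $C'$ denote the value of $C$ at that moment. Two observations are key: (i) $C'\subseteq C_i$ because $C$ grows monotonically across passes, hence $|S^\star\setminus C'|\ge |S^\star\setminus C_i|\ge c_i/k$; and (ii) at that moment $|I|<k$, because the count of picked sets at the end of pass $j-1$ is at most $i\le k-1$ (since pick number $i+1$ happens later, in pass $j$). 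Thus $S^\star$ was not blocked by the cardinality cap, and the only reason it was not taken must be $|S^\star\setminus C'|<\tau_{j-1}=\alpha\tau_j$. Combining the two bounds yields $c_i/k\le |S^\star\setminus C'|<\alpha\tau_j$, so $\tau_j>c_i/(\alpha k)$ and hence $a_{i+1}\ge \tau_j > c_i/(\alpha k)$.

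I do not anticipate a serious obstacle — the one subtlety to be careful about is the bookkeeping that lets me rule out the ``$|I|=k$'' reason for $S^\star$ not being picked in pass $j-1$, which relies on the observation that the algorithm is in the process of making its $(i+1)$st pick with $i+1\le k'\le k$. Once that is handled cleanly, the proof is essentially the monotone-coverage argument paired with the geometric schedule of thresholds.
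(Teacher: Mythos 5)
Your proposal is correct and follows essentially the same route as the paper's proof: identify (by averaging over the optimal solution) a not-yet-picked set contributing at least $c_i/k$ new elements, note that its rejection in pass $j-1$ forces $z/(k\alpha^{j-2})>c_i/k$, and conclude $a_{i+1}\geq z/(k\alpha^{j-1})\geq c_i/(\alpha k)$, with the same separate treatment of $j=1$. Your explicit handling of the $|I|<k$ condition and of the monotonicity $C'\subseteq C_i$ is a bit more careful than the paper's wording but adds nothing substantively different.
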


\begin{proof}
Suppose the algorithm added the $(i+1)$th set $S$ during the $j$th pass. Consider the set of covered elements $C$  just before the algorithm added the set $S$.

We first consider the case where $j = 1$. Then, the algorithm only adds $S$ if 
\[|S \setminus C| \geq \frac{z}{k} \geq \frac{\opt}{ k} \geq \frac{c_{i}}{k} \geq \frac{c_i}{\alpha k} \ .\]

Now, we consider the case where $j > 1$. Note that just before the algorithm added $S$, there must exist a set $S'$ (which could be $S$) that had not  been already added where  $|S' \setminus C|\geq c_i/ k$. This follows because the optimal collection of $k$ sets covers at least $c_i$ elements that are currently uncovered and hence one of these sets must cover at least $c_i/k$ new elements. 
But since $S'$ had not already been added, we know that $S'$ was not added during the first $j-1$ passes and thus, $|S'\setminus C|< z/ (k \alpha^{j-2})$.
Therefore, 
\[
\frac{z}{k \alpha^{j-2}} > |S'\setminus C|\geq \frac{c_i}{k}
\] 
and in particular, $z/(k \alpha^{j-1}) > c_i/(k \alpha)$. Since the algorithm picked $S$, we have 
\[
a_{i+1}= |S\setminus C|\geq \frac{z} { k\alpha^{j-1} } \geq \frac{c_i}{k \alpha}
\] 
as required.\end{proof}

\begin{proof}[Proof of Lemma \ref{lem:core-1}]
It is immediate that the number of passes is  $O(\epsilon^{-1} )$.  The algorithm needs to store the sets $I$ and $C$. 
Since $|C| \leq z$, the total space is $O(k \log m + z \log n)$.

To argue about the approximation factor, we first prove by induction that we always have $c_i \leq \left( 1-\frac{1}{\alpha k} \right)^i \opt$ for $i \leq k'$. Trivially,  $c_0 \leq \paren{1-\frac{1}{\alpha k}}^0 \opt$. Suppose  $c_i \leq (1-\frac{1}{\alpha k})^i \opt$. Then, according to Lemma \ref{lem:3}, $a_{i+1} \geq {c_i}/({\alpha k})$. Thus,
\[
c_{i+1} = c_i - a_{i+1} \leq c_i - \frac{c_i}{\alpha k} = c_i  \left( 1-\frac{1}{ \alpha k} \right) \leq \opt \left( 1-\frac{1}{\alpha k} \right)^{i+1} .
\]

Suppose the final solution contains $k$ sets. Then 
\[
c_k \leq \left (1-\frac{1}{\alpha k} \right )^k \opt \leq e^{-1/\alpha} \opt \leq \paren{\frac{1}{e} + \epsilon} \opt.
\]
As a result, the final solution covers $b_k = \opt - c_k \geq \paren{1-1/e -\epsilon} \opt$ elements.

Suppose the collection of sets $\S$ chosen by the algorithm contains fewer than $k$ sets. We define $\tilde{S} := S \setminus \cover(\S)$ to be the set of elements in $S$ that are not covered by the final solution. For each set $S$ in the optimal solution $\O$, if $S$ is unpicked, then $|\tilde{S}| \leq z/(4 e k)$. Therefore,
\begin{align*}
\opt  = \left| \bigcup_{S \in \O } \left(S \cap \cover(\S) \right) ~ \right| + \left| \bigcup_{S\in \O \setminus \S} \tilde{S} ~ \right|   \leq \left| \cover( \S) \right| + \sum_{S\in \O \setminus \S} \left| \tilde{S}   \right|  & \leq \left| \cover( \S) \right| + \frac{z}{4e}  \\
& \leq \left| \cover( \S) \right| + \frac{\opt}{e} 
\ .
\end{align*}
Hence, $\left| \cover(\S) \right| \geq  (1-1/e)\opt$.
\end{proof}

Following the approach outlined in Section \ref{sec:magic}  we may assume $z=O(\epsilon^{-2} k \log m)$ and that $\opt \leq z\leq 4\opt.$

\begin{theorem}\label{thm:key-result-1}
There exists an $O(\epsilon^{-1})$-pass, $\tO(\epsilon^{-2} k)$ space algorithm that finds a $1-1/e-\epsilon$ approximation of  \MkSC with high probability.
\end{theorem}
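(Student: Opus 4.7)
The plan is to invoke the ``guess, subsample, and verify'' framework from Section \ref{sec:magic} to replace the dependence on $\opt$ inside the space bound of Lemma \ref{lem:core-1} by a polylog-in-$m$ quantity. Concretely, Lemma \ref{lem:core-1} gives an $O(\epsilon^{-1})$-pass, $O(k\log m + z\log n)$-space algorithm that $(1-1/e-\epsilon)$-approximates \MkSC whenever $\opt \le z \le 4\opt$. So it suffices to show that, at the cost of $\polylog$ factors and additional $F_0$-sketches, we can run this base algorithm on a \emph{subsampled universe} whose optimum is $\Theta(\epsilon^{-2} k \log m)$, which is exactly what is known to be obtainable via the framework.

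First I would run, in parallel, $O(\log n)$ copies of the base algorithm, one for each guess $v \in \{1, 2, 4, \ldots, n\}$ of $\opt$. For the copy indexed by $v$, I subsample the universe $[n]$ using a pairwise-independent (or $\Theta(\log m)$-wise-independent) hash function $h_v:[n]\to \{0,1\}$ with $\prob{h_v(e)=1} = p_v := \min\bigl(1, C\epsilon^{-2} k\log m / v\bigr)$ for a large enough constant $C$. The stream is ``filtered'' by keeping only those elements $e$ with $h_v(e)=1$; feed the resulting substream to the base algorithm of Lemma \ref{lem:core-1} with budget parameter $z := 4C\epsilon^{-2} k \log m$. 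If that base instance ever stores more than $z$ covered elements, abort the instance. In parallel, maintain an $F_0$-sketch of the original (unsampled) coverage for each candidate solution produced; at the end output the candidate with the largest estimated coverage.

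The key observation, which drives correctness, is that for the ``near-correct'' guess $v^\star$ satisfying $\opt \le v^\star \le 2\opt$, we have $p_{v^\star}\opt \in [\tfrac{1}{2}C\epsilon^{-2}k\log m, C\epsilon^{-2}k\log m]$, so by a Chernoff/limited-independence concentration inequality (applicable to any fixed union of at most $k$ sets, then union-bounded over the $\binom{m}{k}\le m^k$ such unions using $\Theta(\log m)$-wise independence), every $k$-tuple of sets has its subsampled coverage within a $(1\pm\epsilon)$ factor of $p_{v^\star}$ times its true coverage. In particular the subsampled optimum lies in $[\opt',4\opt']$ for $\opt' := p_{v^\star}\opt$, so Lemma \ref{lem:core-1} produces a $(1-1/e-\epsilon)$-approximate solution on the subsampled instance, which pulls back to a $(1-1/e-O(\epsilon))$-approximate solution on the original instance; rescaling $\epsilon$ gives the stated bound. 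For guesses with $v \gg \opt$ the subsampled problem is too sparse but the solution there cannot beat the correct guess's solution in the $F_0$-estimated coverage, and for guesses with $v \ll \opt$ the base instance aborts (since the subsampled optimum blows past $z$), so we never return those solutions.

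For the space bound, each surviving base instance uses $O(k\log m + z\log n) = \tO(\epsilon^{-2}k)$ bits; hash functions of degree $\Theta(\log m)$ take $\tO(1)$ bits of seed each; $O(\log n)$ copies plus $O(\log n)$ $F_0$-sketches of size $\tO(\epsilon^{-2})$ add only polylog factors, giving the claimed $\tO(\epsilon^{-2} k)$ total. The pass count is $O(\epsilon^{-1})$ since every copy runs in lockstep on the same physical passes. The main obstacle is the concentration argument: because the hash function is only $\Theta(\log m)$-wise independent rather than fully independent, and because the quantity being concentrated (union of $k$ sets) is not a sum of independent bits, one must appeal to a bounded-independence Chernoff bound (as noted in the techniques paragraph of the introduction) and then union-bound carefully over all $k$-subsets of sets; everything else is bookkeeping.
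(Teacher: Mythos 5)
Your architecture is exactly the paper's: guess $v$ in powers of two, subsample the universe at rate $\Theta(\epsilon^{-2}k\log m/v)$, run the base algorithm of Lemma \ref{lem:core-1} with $z=\Theta(\epsilon^{-2}k\log m)$ on the filtered stream, abort instances whose coverage overflows, and select the winner with $F_0$ sketches. However, there is a genuine quantitative gap in the step you yourself flag as the crux: the degree of independence of the hash function. Pairwise independence gives only Chebyshev, i.e.\ a per-collection failure probability of order $1/(\gamma^2\mu)=O(1/(k\log m))$, which cannot survive a union bound over the $\binom{m}{k}\le m^k$ collections of $k$ sets. Even $\Theta(\log m)$-wise independence, via the $t$-th moment method, yields a tail of roughly $\bigl(O(t)/(\gamma^2\mu)\bigr)^{t/2}=m^{-O(1)}$ per collection, again far short of the $m^{-\Omega(k)}$ needed. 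The paper instead draws $h$ from a $2\lambda$-wise independent family with $\lambda=c\epsilon^{-2}k\log m$; the Schmidt et al.\ bound (Theorem \ref{thm:chernoff-hashing}) applies because the required degree is $\lceil\gamma\mu\rceil=\lceil\epsilon vp\rceil=\Theta(\epsilon^{-1}k\log m)$, and it delivers failure probability $\exp(-\Omega(k\log m))\le m^{-10k}$. The seed for such a hash function is $\tO(\epsilon^{-1}k)$ to $\tO(\epsilon^{-2}k)$ bits, not the $\tO(1)$ you claim, but it still fits inside the stated space budget, which is why the theorem goes through.

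Two smaller points. First, your constants do not satisfy the hypothesis of Lemma \ref{lem:core-1}: with $\opt\le v^\star\le 2\opt$ and $z=4C\epsilon^{-2}k\log m$ you only get $z\le \frac{8}{1-\epsilon}\opt'$, not $z\le 4\opt'$; the paper avoids this by taking $\opt/2\le v\le\opt$ and $z=2(1+\epsilon)\lambda$, and you could equally fix it by shrinking $z$. Second, the pull-back from the subsampled instance to the original one is not automatic from concentration of the optimum alone; you need the two-sided estimate of Lemma \ref{lem:approximation-hashing} applied to \emph{every} $k$-tuple (as in Corollary \ref{cor:subsample-opt}) so that a near-optimal subsampled solution is also near-optimal before subsampling. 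You gesture at this correctly, and it is exactly where the high-degree independence is consumed.
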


\subsection{Removing Assumptions via Guessing, Sampling, and Sketching}
\label{sec:magic}
In this section, we address the fact that in the previous two sections we assumed a priori knowledge of a constant approximation of the maximum number of elements that could be covered and that this optimum was of size $O(\epsilon^{-2} k \log m)$.

Addressing both issues are interrelated and are based on a subsampling approach. The basic idea is to run the above algorithms on a new instance formed by removing  occurrences of certain elements in $[n]$ from all the input sets. The goal is to reduce the maximum coverage to $\min(n,O(\epsilon^{-2} k \log m))$ while ensuring that a good approximation in the subsampled instance corresponds to a good approximation in the original instance. In the rest of this section we will assume that $k=o(\epsilon^{2} n /\log m)$ since otherwise this bound is trivial.

%\section{Chernoff Bound}
In this section, we will need to use the  following Chernoff bound for limited independent random variables. 
\begin{theorem}[Schmidt  et al.~\cite{SchmidtSS95}]\label{thm:chernoff-hashing}
Let $X_1,\ldots, X_n$ be binary random variables. Let $X =\sum_{i=1}^n X_i$ and $\mu = \expec{X}$. Suppose $\mu \leq n/2$. If $X_i$ are $\lceil \gamma \mu \rceil$-wise independent, then
\[\prob{\left| X-\mu \right| \geq \gamma \mu} \leq \expo{-\lfloor \min(\gamma, \gamma^{2}) \cdot \mu/3 \rfloor}\ .\]
\end{theorem}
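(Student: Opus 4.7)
The plan is to prove this tail bound via the standard method of moments, which is the canonical reduction from $k$-wise independence to Chernoff-type concentration. The key observation is that $k$-wise independence preserves every joint expectation $\E[X_{i_1}\cdots X_{i_s}]$ involving $s \leq k$ distinct indices, so any polynomial in $X_1,\ldots,X_n$ of degree at most $k$ has exactly the same expectation as it would under full independence. First I would pick an even integer $t$ of order $\lfloor \min(\gamma,\gamma^2)\mu/3\rfloor$ with $t \leq \lceil \gamma\mu\rceil$ (so that $t$-wise independence is actually available), and work with the $t$-th central moment $\E[(X-\mu)^t]$, which is nonnegative because $t$ is even.

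Expanding via the multinomial theorem,
\[
\E[(X-\mu)^t] \;=\; \sum_{\alpha_1+\cdots+\alpha_n=t} \binom{t}{\alpha_1,\ldots,\alpha_n}\,\E\!\left[\prod_{i=1}^n (X_i - p_i)^{\alpha_i}\right],
\]
where $p_i := \E[X_i]$. Each summand involves at most $t$ distinct indices in its support, so $t$-wise independence lets me factor every expectation into a product of marginal moments; terms containing any $\alpha_i = 1$ vanish because $\E[X_i - p_i] = 0$, and terms with $\alpha_i \geq 2$ are controlled using $|X_i - p_i| \leq 1$ and $\E[(X_i-p_i)^2] \leq p_i$. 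This reproduces, for $t$-wise independent Bernoullis, the same moment estimate one gets under full independence, which standard arguments (e.g.\ Bellare--Rompel's refinement of this exact lemma) bound by $\max\bigl((ct\mu)^{t/2},\,t^t\bigr)$ for an absolute constant $c$. Markov's inequality then gives
\[
\prob{|X-\mu|\geq \gamma\mu} \;\leq\; \frac{\E[(X-\mu)^t]}{(\gamma\mu)^t},
\]
and balancing the two pieces of the moment bound against $(\gamma\mu)^t$ produces the sub-Gaussian regime $e^{-\gamma^2\mu/3}$ when $\gamma \leq 1$ (taking $t\sim \gamma^2\mu$) and the sub-Poisson regime $e^{-\gamma\mu/3}$ when $\gamma > 1$ (taking $t\sim \gamma\mu$), which combine to give $e^{-\min(\gamma,\gamma^2)\mu/3}$.

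The hard part will be the combinatorial bookkeeping inside the moment bound. After discarding the vanishing terms, what remains is a sum indexed by compositions of $t$ whose parts are all at least $2$; the sub-Gaussian tail is dominated by compositions with every part equal to $2$ (essentially Wick-pairing-style contributions of order $(t\mu)^{t/2}$), while the sub-Poisson tail is dominated by compositions with a few large parts. Tracking the multinomial coefficients carefully against the marginal moments, and calibrating the constants so that the final exponent is exactly $\lfloor \min(\gamma,\gamma^2)\mu/3\rfloor$, is the only technical subtlety; the structural argument is exactly as sketched above.
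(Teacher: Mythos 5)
The paper does not prove this statement: it is imported verbatim as a black box from Schmidt, Siegel, and Srinivasan \cite{SchmidtSS95}, so there is no in-paper proof to match yours against. Your route --- even central moments, the observation that $t$-wise independence fixes the expectation of every degree-$\le t$ polynomial, and Markov applied to $\E[(X-\mu)^t]/(\gamma\mu)^t$ --- is the standard Bellare--Rompel argument, and it is structurally sound: it correctly yields a tail bound of the form $\exp\left(-c_0\min(\gamma,\gamma^2)\mu\right)$ for some absolute constant $c_0>0$ using only $O(\min(\gamma,\gamma^2)\mu)$-wise independence, which is all this paper actually needs (the constant $c$ in $\lambda=c\epsilon^{-2}k\log m$ is taken large enough to absorb any such $c_0$). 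It is, however, a genuinely different argument from the one in the cited source, which works with the elementary symmetric polynomials $Y_t=\sum_{|S|=t}\prod_{i\in S}X_i=\binom{X}{t}$ (also degree-$t$, hence also determined by $t$-wise independence), bounds $\E[Y_t]\le \mu^t/t!$, and compares directly against the fully independent Chernoff derivation; that sharper comparison is what delivers the exact exponent $\lfloor\min(\gamma,\gamma^2)\mu/3\rfloor$ with leading coefficient $1$, and handles the lower tail by complementation (which is where the hypothesis $\mu\le n/2$ is used --- a hypothesis your argument never touches).

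The one point I would push back on is your characterization of the constant calibration as mere ``combinatorial bookkeeping.'' With the moment estimate $\E[(X-\mu)^t]\le\max\bigl((ct\mu)^{t/2},\,t^t\bigr)$ and the constraint $t\le\lceil\gamma\mu\rceil$, optimizing $t$ gives an exponent of the form $\min(\gamma,\gamma^2)\mu/C$ where $C$ depends on $c$, typically with $C$ noticeably larger than $3$ and with a multiplicative prefactor in front of the exponential. Reproducing the literal statement --- constant exactly $1/3$, prefactor exactly $1$ --- is not something the central-moment route is known to do, and it is precisely the reason \cite{SchmidtSS95} developed the symmetric-polynomial technique. So treat your sketch as a proof of a qualitatively equivalent theorem (sufficient for every application in this paper) rather than of the quoted one.
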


\paragraph{Subsampling.} Assume we know a value $v$ that satisfies $\opt/2 \leq v\leq \opt$. 
Let $c$ be some sufficiently large constant and set $\lambda=c \epsilon^{-2}  k \log m$.  Let $h:[n]\rightarrow \{0,1\}$ be drawn from a family of  $2 \lambda$-wise independent hash functions  where 
\[p := \prob{h(e) = 1} = {\lambda}/{v}. \]
The space to store $h$ is $\tO(\epsilon^{-2} k )$. For any set $S$ that is a subset of $[n]$, we  define 
\[S' :=  \{ e \in S: h(e) = 1 \}.\]
The next lemma and its corollary will allow us to argue that approximating the maximum coverage among the elements  $\{ e \in [n]: h(e) = 1 \}$ gives only a slightly weaker approximation of the maximum coverage among the original set of elements.
\begin{lemma} \label{lem:approximation-hashing}
With high probability\footnote{We consider $1-1/\poly(m)$ or $1-1/\poly(n)$ as high probability.}, for all collections of $k$ sets $S_1,\ldots,S_k$ in the stream,
$|S_1' \cup \ldots \cup S_k'|=|S_1 \cup \ldots \cup S_k|p \pm \epsilon vp \ .$
\end{lemma}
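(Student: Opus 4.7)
The plan is to fix an arbitrary collection of $k$ sets $S_1,\ldots,S_k$, apply Theorem~\ref{thm:chernoff-hashing} to the indicator variables $X_e := h(e)$ for $e$ in the union $U := S_1\cup\cdots\cup S_k$, and then union bound over the $\binom{m}{k}\le m^k$ possible collections. Note that $|U'|=\sum_{e\in U} X_e$, the $X_e$ are $2\lambda$-wise independent $\mathrm{Bernoulli}(p)$ variables, and $\mu := \E[|U'|]=|U|p$. The key structural fact is that, since $v \geq \opt/2$ and $|U|\le \opt$, we always have $|U|\le 2v$, so in particular $\mu \le 2\lambda \le n/2$ for $n$ large enough (using $k = o(\epsilon^2 n/\log m)$), and the hypotheses of Theorem~\ref{thm:chernoff-hashing} are met.

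Next, I would split into cases based on the size of $|U|$ relative to the target deviation $\epsilon vp = \epsilon\lambda$. First, if $|U|\le \epsilon vp$, then both $|U'|\le |U|$ and $\mu \le |U|$ are already at most $\epsilon vp$, so the bound $\bigl||U'|-\mu\bigr|\le \epsilon vp$ holds deterministically. Otherwise, choose $\gamma := \epsilon v/|U|$ so that $\gamma\mu = \epsilon vp = \epsilon\lambda$, which is the desired additive deviation. One checks that $\lceil \gamma\mu\rceil \le \lceil \epsilon\lambda\rceil \le \lambda \le 2\lambda$ (assuming $\epsilon\le 1$), so the independence hypothesis of the Chernoff bound is satisfied.

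The main calculation is to lower bound $\min(\gamma,\gamma^2)\mu$. If $\gamma\ge 1$ (equivalently $|U|\le \epsilon v$), then $\min(\gamma,\gamma^2)\mu = \gamma\mu = \epsilon\lambda$. If $\gamma<1$, then $\min(\gamma,\gamma^2)\mu = \gamma^2\mu = \gamma\cdot(\gamma\mu) = (\epsilon v/|U|)\cdot\epsilon\lambda \ge \epsilon^2\lambda/2$, using $|U|\le 2v$. In either case, $\min(\gamma,\gamma^2)\mu/3 \ge \epsilon^2\lambda/6 = (c/6)\,k\log m$, so Theorem~\ref{thm:chernoff-hashing} yields
\[
\Pr\bigl[\,\bigl||U'|-\mu\bigr|\ge \epsilon vp\,\bigr] \;\le\; \exp\!\bigl(-\lfloor (c/6)k\log m\rfloor\bigr) \;\le\; m^{-c'k}
\]
for a constant $c'$ that grows with $c$. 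Finally, a union bound over the at most $m^k$ choices of the $k$-set collection gives a total failure probability of at most $m^{k-c'k}$, which is $1/\poly(m)$ once $c$ is chosen to be a sufficiently large absolute constant.

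The main obstacle I anticipate is the bookkeeping around the limited-independence Chernoff bound: verifying that $\lceil \gamma\mu\rceil\le 2\lambda$ so that $2\lambda$-wise independence suffices, controlling $\min(\gamma,\gamma^2)\mu$ uniformly over all $|U|\in[1,2v]$ (in particular dealing with the regime $|U|\le \epsilon vp$ where $\mu<1$ and Chernoff is vacuous, via the trivial deterministic bound above), and choosing the constant $c$ in $\lambda=c\epsilon^{-2}k\log m$ large enough that the per-collection failure probability beats the $m^k$ union-bound factor with room to spare.
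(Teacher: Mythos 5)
Your proposal is correct and follows essentially the same route as the paper's proof: fix a collection, apply the Schmidt et al.\ limited-independence Chernoff bound with $\gamma=\epsilon v/|U|$ so that $\gamma\mu=\epsilon vp=\epsilon\lambda$, lower-bound $\min(\gamma,\gamma^2)\mu$ by $\epsilon^2\lambda/2$ using $|U|\le\opt\le 2v$, and union bound over the $\binom{m}{k}\le m^k$ collections. The only difference is your extra deterministic case for $|U|\le\epsilon vp$, which is harmless but not needed since the exponent depends on $\gamma\mu$ rather than $\mu$ alone.
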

\begin{proof}
Fix any collection of $k$ sets $S_1,\ldots,S_k$. Let $D=|S_1 \cup \ldots \cup S_k| $ and $D'=|S_1' \cup \ldots \cup S_k'|$. We first observe that since $k=o(\epsilon^{2} n /\log m)$, we may assume that $\lambda =o( n)$.
\[\mu:=\expec{D'} = p D \leq p  \opt < 2 p v = 2\lambda \leq n/2.  \]

Appealing to the Chernoff bound with limited independence (Theorem \ref{thm:chernoff-hashing}) with the binary variables $X_i = 1$ if and only if $i \in S_1 \cup \ldots \cup S_k$ and $h(i)=1$, i.e, $D' = \sum_{i=1}^n X_i$, we have
\[
\prob{|D' -\mu| \geq \epsilon vp}
=
\prob{|D' -\mu| \geq \gamma Dp}\leq \exp \left (-\lfloor \min ( \gamma,\gamma^2) \cdot \mu/3\right \rfloor )
\]
where $\gamma=\epsilon v/D$ since the hash function is $\lceil \gamma \mu \rceil=\lceil \epsilon vp\rceil$-wise independent. But note that 
\begin{align*}
\exp \left(-\lfloor \min ( \gamma,\gamma^2) \cdot \frac{\mu}{3} \rfloor \right)
& =\exp \left( -\lfloor \min ( 1,\gamma) \cdot  \frac{\epsilon v p}{ 3} \rfloor \right) \\
&  \leq \exp \left( -\left \lfloor \frac{1}{2} \cdot \frac{c k \log m}{3} \right \rfloor \right) \leq \frac{1}{m^{10k}}
\end{align*}
where we use the fact that $\gamma=\epsilon v/D\geq \epsilon/2$ because $D\leq \opt \leq 2v$.
The lemma follows by taking the union bound over all ${m \choose k}$ collections of $k$ sets.
\end{proof}

In particular, the following corollary establishes that a $1/t$ approximation when restricted to elements in $\{ e \in [n]: h(e) = 1 \}$ yields a $(1/t-2\epsilon)$ approximation and at most $p\opt(1+\epsilon)=O(\epsilon^{-2} k \log m)$ of these elements can be covered by $k$ sets.

\begin{corollary}\label{cor:subsample-opt}
Let $\opt'$ be optimal number of elements that can be covered from $\{ e \in [n]: h(e) = 1 \}$. Then,
\[
p\opt(1+\epsilon) \geq
\opt'
\geq p\opt(1-\epsilon)
\]
Furthermore if $U_1,\ldots,U_k$ satisfies $|U_1' \cup \ldots \cup U_k'|\geq p\opt (1-\epsilon)/t$ for $t\geq 1$ then 
\[|U_1 \cup \ldots \cup U_k|\geq \opt \paren{\frac{1}{t}-2\epsilon} \ . \]
\end{corollary}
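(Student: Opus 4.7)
}
The plan is to deduce both claims from Lemma \ref{lem:approximation-hashing} as a blackbox, using only the additional fact that $v \le \opt$. I will condition throughout on the high-probability event of Lemma \ref{lem:approximation-hashing}, so the additive error $\epsilon v p$ applies simultaneously to every collection of $k$ sets.

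First I would establish the lower bound $\opt' \ge p\opt(1-\epsilon)$ by plugging the original optimal collection $O_1,\ldots,O_k$ (with $|O_1\cup\cdots\cup O_k|=\opt$) into Lemma \ref{lem:approximation-hashing}, yielding $|O_1'\cup\cdots\cup O_k'| \ge p\opt - \epsilon v p \ge p\opt(1-\epsilon)$ since $v \le \opt$. The subsampled optimum $\opt'$ is at least this quantity. For the matching upper bound $\opt' \le p\opt(1+\epsilon)$, I would take the collection $T_1,\ldots,T_k$ achieving $\opt'$ on the subsampled universe and again invoke Lemma \ref{lem:approximation-hashing}: $\opt' = |T_1'\cup\cdots\cup T_k'| \le p|T_1\cup\cdots\cup T_k| + \epsilon v p \le p\opt + \epsilon p \opt = p\opt(1+\epsilon)$, where the middle inequality uses that $T_1,\ldots,T_k$ cover at most $\opt$ elements in the original universe and that $v \le \opt$.

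For the second part, suppose $|U_1'\cup\cdots\cup U_k'| \ge p\opt(1-\epsilon)/t$. Lemma \ref{lem:approximation-hashing} gives $|U_1'\cup\cdots\cup U_k'| \le p|U_1\cup\cdots\cup U_k| + \epsilon v p$, so rearranging and dividing by $p$,
\[
|U_1\cup\cdots\cup U_k| \;\ge\; \frac{\opt(1-\epsilon)}{t} - \epsilon v \;\ge\; \frac{\opt}{t} - \frac{\epsilon\,\opt}{t} - \epsilon\,\opt \;\ge\; \opt\!\left(\frac{1}{t} - 2\epsilon\right),
\]
where the penultimate step again uses $v \le \opt$ and the final step uses $t \ge 1$ (so $\epsilon/t \le \epsilon$).

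There is no real obstacle here: the work was done in Lemma \ref{lem:approximation-hashing}, and the corollary is just a clean repackaging. The only thing to watch is keeping the direction of the additive error $\pm\epsilon v p$ straight when switching between the lower bound (apply to original optimum, keep the minus sign) and the upper bound (apply to subsampled optimum, keep the plus sign), and to consistently replace $v$ by $\opt$ using $v\le \opt$ rather than $v\ge \opt/2$.
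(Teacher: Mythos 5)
Your proposal is correct and follows essentially the same route as the paper: both parts are read off from Lemma \ref{lem:approximation-hashing} by applying it to the original optimum (for the lower bound), to an arbitrary/optimal subsampled collection (for the upper bound), and then rearranging the additive error $\epsilon v p$ using $v\le\opt$ and $t\ge 1$. No gaps.
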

\begin{proof}
The fact that $\opt'
\geq p\opt(1-\epsilon)$ follows by applying Lemma \ref{lem:approximation-hashing} to the optimal solution. According to Lemma \ref{lem:approximation-hashing}, for all collections of $k$ sets $U_1,\ldots, U_k$, we have \[|U_1' \cup \ldots \cup U_k'|=|U_1 \cup \ldots \cup U_k|p \pm \epsilon vp\leq p\opt (1+\epsilon)\] which implies the first inequality. 

Now, suppose  $|U_1' \cup \ldots \cup U_k'|\geq p\opt (1-\epsilon)/t$. Since $|U_1' \cup \ldots \cup U_k'| - \epsilon vp  \leq |U_1 \cup \ldots \cup U_k|p  $, we deduce that 
$ |U_1 \cup \ldots \cup U_k|\geq \opt (1-\epsilon)/t-\epsilon v\geq \opt(1/t-2\epsilon)$.
\end{proof}

Hence, since  we know $v$ such that $\opt/2 \leq v\leq \opt$, then we know that 
\begin{align}  \label{eq:1}
(1-\epsilon) \lambda \leq \opt' \leq 2 (1+ \epsilon) \lambda 
\end{align}
with high probability according to  Corollary \ref{cor:subsample-opt}. Then, by setting $z = 2 (1+ \epsilon) \lambda$, we ensure that $ \opt' \leq z \leq 4 \opt'$.

\paragraph{Guessing $v$ and $F_0$ Sketching.} We still need to address how to compute $v$ such that $\opt/2 \leq v \leq \opt$. The natural approach is to make $\lceil \log_2 n \rceil$ guesses for $v$ corresponding to $1, 2, 4, 8 \ldots$ since one of these will be correct.\footnote{The number of guesses can be reduced to $\lceil \log_2 k \rceil$ if the size of the largest set is known since this gives a $k$ approximation of $\opt$. The size of the large set can be computed in one additional pass if necessary.} We then perform multiple parallel instantiations of the algorithm corresponding to each guess. This increases the space by a factor of $O(\log n)$.

But how do we determine which instantiation corresponds to the correct guess? The most expedient way to deal with this question is to sidestep the issue as follows. Instantiations corresponding to guesses that are too small may find it is possible to cover $\omega(\epsilon^{-2} k \log m)$ elements so we will terminate any instantiation as soon as it covers more than $O(\epsilon^{-2} k \log m)$ elements. Note that by Corollary \ref{cor:subsample-opt} and Equation \ref{eq:1}, we will not terminate the instantiation corresponding to the correct guess.

Among the instantiations that are not terminated we simply return the best solution. To find the best solution we want to estimate $|\cup_{i\in I} S_i|$, i.e., the coverage of the corresponding sets \emph{before} the subsampling. To compute this estimate in small space we can use the \emph{$F_0$-sketching} technique. For the purposes of our application, we can summarize the required result as follows:

\begin{theorem}[Cormode et al.~\cite{CormodeDIM03}]\label{thm:F0-approximation}
There exists an $\tO(\epsilon^{-2}\log \delta^{-1})$-space algorithm that, given a set $S\subseteq [n]$, can construct a data structure $\mathcal{M}(S)$, called an \emph{$F_0$ sketch} of $S$, that has the property that the number of distinct elements in a collection of sets $S_1, S_2, \ldots, S_r$ can be approximated up to a $1 + \epsilon$ factor with  probability at least $1-\delta$ given the collection of $F_0$ sketches $\mathcal{M}(S_1), \mathcal{M}(S_2), \ldots, \mathcal{M}(S_r)$.
\end{theorem}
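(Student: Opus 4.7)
The plan is to use the classical $k$-Minimum Values (KMV) sketch together with a median boosting trick. Fix $t = \Theta(\epsilon^{-2})$ and let $M = \poly(n)$. Draw a $2t$-wise (or pairwise, depending on the desired analysis) independent hash function $h : [n] \to [M]$. For a single input set $S$, define $\mathcal{M}(S)$ to be the multiset of the $t$ smallest values among $\{h(e) : e \in S\}$ (or all of them if $|S| < t$); storing these requires $\tO(\epsilon^{-2})$ bits. The crucial property, exploited for mergeability, is that for any $S_1,\ldots,S_r$ the $t$ smallest values in $\bigcup_{i} \{h(e) : e \in S_i\}$ are exactly the $t$ smallest values of $\bigcup_i \mathcal{M}(S_i)$; hence the union sketch can be recovered from the individual sketches in post-processing.

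For the estimator, let $U = S_1 \cup \cdots \cup S_r$, let $D = |U|$, and let $v_t$ be the $t$-th smallest hash value in $\mathcal{M}(U)$ as reconstructed above. I would output $\hat{D} = t \cdot M / v_t$. The intuition is that the hash values of distinct elements of $U$ behave like $D$ i.i.d.\ uniform samples from $\{1,\ldots,M\}$, so the $t$-th order statistic is concentrated around $t M / D$; inverting gives the estimator. To make this rigorous, I would show that $\Pr[\hat{D} > (1+\epsilon) D]$ and $\Pr[\hat{D} < (1-\epsilon) D]$ are each at most a constant (say $1/3$) by translating each event into a statement about how many elements of $U$ fall into a specific prefix of hash values, and bounding the variance of that count using limited independence.

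The core calculation is: let $Y = |\{e \in U : h(e) \leq tM/((1+\epsilon)D)\}|$. Then $\hat{D} > (1+\epsilon) D$ iff $Y < t$, while $\E[Y] \approx t/(1+\epsilon)$ and $\Var[Y] \leq \E[Y]$ by pairwise independence; Chebyshev then gives constant failure probability for $t = \Theta(\epsilon^{-2})$. The symmetric lower-tail bound is analogous. To drive the failure probability from a constant down to $\delta$, I would repeat the construction independently $\Theta(\log \delta^{-1})$ times with independent hash functions and return the median of the $\Theta(\log \delta^{-1})$ resulting estimates; a standard Chernoff bound on the number of ``good'' estimates finishes the argument. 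The overall space is $\tO(\epsilon^{-2} \log \delta^{-1})$, matching the claim.

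The main obstacle is the concentration step. A naive analysis that treats the hash values as i.i.d.\ continuous uniform is clean but sidesteps two issues: collisions in a finite hash range, and the fact that the hash function is only $k$-wise independent. I would handle collisions by choosing $M$ large enough that a union bound gives no collisions among elements of $U$ with high probability (using $M = n^3$ suffices), and I would use a variance argument with pairwise independence so that the dependence on the independence level of $h$ stays mild. An alternative I would consider, if matching the citation exactly matters, is the ``distinct sampling'' scheme of Cormode et al., which subsamples elements at geometrically decreasing rates and keeps a counter at each level; the mergeability and analysis are similar, with the geometric sampling replacing the order-statistic argument above.
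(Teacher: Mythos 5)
The paper does not actually prove this statement: Theorem~\ref{thm:F0-approximation} is imported as a black box from Cormode et al.\ and is used only for its mergeability property, so there is no internal proof to compare against. Your proposal is a legitimate self-contained proof via the standard $k$-minimum-values sketch, and the overall architecture is sound: a single shared hash function across all sets, mergeability of the $t$ smallest values under unions, a threshold-count reformulation of the order-statistic estimator, Chebyshev with pairwise independence at $t = \Theta(\epsilon^{-2})$, collision control via $M = n^3$, and median-of-$\Theta(\log\delta^{-1})$ amplification. This is a different (and arguably more elementary) construction than the level-based distinct-sampling scheme of the cited paper, which you correctly flag as the alternative; both yield the same space bound and the same mergeability guarantee, which is all the surrounding algorithms need.

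One concrete error to fix in the concentration step: the direction of the tail event is reversed. With $\hat{D} = tM/v_t$ and $Y = |\{e \in U : h(e) \leq tM/((1+\epsilon)D)\}|$, the event $\hat{D} > (1+\epsilon)D$ is equivalent to $v_t < tM/((1+\epsilon)D)$, i.e.\ to $Y \geq t$, not to $Y < t$. As written, your claim would say that a typical outcome (since $\mathbf{E}[Y] \approx t/(1+\epsilon) < t$) certifies failure, which would make the bound vacuous. The corrected version is exactly what Chebyshev handles: $Y \geq t$ is a deviation of roughly $\epsilon t/(1+\epsilon)$ above the mean, and $\mathrm{Var}[Y] \leq \mathbf{E}[Y]$ under pairwise independence gives failure probability $O(1/(\epsilon^2 t))$, a small constant for your choice of $t$. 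The symmetric lower tail should likewise be stated as $\hat{D} < (1-\epsilon)D$ iff fewer than $t$ elements hash below $tM/((1-\epsilon)D)$. With that sign fix the argument is complete.
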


For the algorithms in the previous section, we can maintain a sketch $\mathcal{M}(C)$ of the set of covered elements in $\tO(\epsilon^{-2}\log \delta^{-1})$ space and from this can estimate the desired coverage. We set $\delta \leftarrow \Theta(1/n \cdot \log n)$ so that coverages of all non-terminated instances are estimated up to a factor $(1+\epsilon)$ with high probability.

\subsection{Other Algorithmic Results}

In this final subsection, we briefly review some other algorithmic results for \MkSC, either with different trade-offs or for a ``budgeted'' version of the problem.

\subsubsection{$(1-\epsilon)$ approximation in one pass and $\tO(\epsilon^{-2}mk)$ space} In the previous subsection, we gave a single-pass $1-1/e-\epsilon$ approximation using $\tO(\epsilon^{-2} m)$ space. Here we observe that if we are permitted  $\tO(\epsilon^{-2} mk)$ space and unlimited post-processing time then a $1-\epsilon$ approximation can be achieved directly from the $F_0$ sketches. 

In particular, in one pass we construct the $F_0$ sketches of all $m$ sets, $\mathcal{M}({S}_1),\ldots,\mathcal{M} ({S}_m)$ where the failure probability of the sketches is set to  $\delta = {1}/({n m^k})$. Thus, at the end of the stream, one can $1+\epsilon$ approximate the coverage $|S_{i_1}\cup \ldots \cup  S_{i_k}|$ for each collection of $k$ sets $S_{i_1}, \ldots, S_{i_k}$ with probability at least $1- 1/({n m^k})$. Since there are at most ${m \choose k} \leq m^k$ collections of $k$ sets, appealing to the union bound, we guarantee that the coverages of all of the collections of $k$ sets are preserved up to a $1+\epsilon$ factor with probability at least $1-1/n$. The space to store the sketches is $\tO(\epsilon^{-2}mk)$. 

\begin{theorem}\label{thm:key-result-5}
There exists a single-pass, $\tO(\epsilon^{-2}mk)$-space algorithm that finds a $1-\epsilon$ approximation of \MkSC with high probability .
\end{theorem}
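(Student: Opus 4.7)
The plan is to directly exploit the $F_0$-sketching primitive of Theorem \ref{thm:F0-approximation}: in a single pass I would build, for every set $S_i$ encountered in the stream, an $F_0$ sketch $\mathcal{M}(S_i)$ whose mergeability property is exactly what is needed to approximate $|S_{i_1}\cup\cdots\cup S_{i_k}|$ from the sketches alone. The only nontrivial design choice is the failure probability $\delta$ supplied to the sketch: because we will later query every $k$-subset of $[m]$, it must be set small enough for a union bound over all $\binom{m}{k}$ queries to go through.

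Concretely, I would set $\delta = 1/(n m^k)$. Plugging this into Theorem \ref{thm:F0-approximation} gives per-sketch space $\tO(\epsilon^{-2}\log\delta^{-1}) = \tO(\epsilon^{-2} k\log m)$, so storing the sketches for all $m$ sets uses $\tO(\epsilon^{-2} m k)$ space in total; this matches the target bound. The update is trivial in the single-pass setting since each element of $S_i$ can be fed into $\mathcal{M}(S_i)$ as it arrives, with no need to see $S_i$ in its entirety before committing storage.

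For the post-processing, I would simply enumerate all $\binom{m}{k} \le m^k$ collections of $k$ sets, estimate the coverage of each by merging the corresponding $F_0$ sketches, and output the collection maximizing this estimate. By the union bound, with probability at least $1-\binom{m}{k}\delta\geq 1-1/n$ every one of these estimates is within a $(1\pm\epsilon)$ factor of the true coverage; conditioning on this event, the returned collection covers at least a $(1-\epsilon)/(1+\epsilon)\ge 1-2\epsilon$ fraction of $\opt$, which yields the claimed $(1-\epsilon)$ approximation after rescaling $\epsilon$ by a constant.

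The main conceptual point (and the only thing that really needs checking) is that this brute-force exponential-time post-processing is legitimate in the theorem statement, since no running-time bound is asserted; the only potential technical obstacle is confirming that the merge of $k$ $F_0$ sketches with per-sketch failure probability $\delta$ still gives a $(1\pm\epsilon)$ estimate with probability $1-\delta$ for the union, but this is precisely the guarantee packaged in Theorem \ref{thm:F0-approximation}. With this in hand, the space accounting, the union bound, and the approximation guarantee combine immediately to give Theorem \ref{thm:key-result-5}.
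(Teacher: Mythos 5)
Your proposal is correct and follows essentially the same approach as the paper: build $F_0$ sketches of all $m$ sets with failure probability $\delta=1/(nm^k)$, union bound over the at most $m^k$ collections of $k$ sets, and enumerate all collections in post-processing to return the best. The space accounting and the rescaling of $\epsilon$ match the paper's argument.
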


In comparison to the algorithm in Theorem \ref{thm:key-result-4}, the algorithm above is non-adaptive. It also uses less space in the case where $k$ is much smaller than  $\epsilon^{-1}$. 

\subsubsection{$(1/2-\epsilon)$ \label{sec:1/2-approx} approximation in one pass and $\tO(\epsilon^{-3}k)$ space}
We next observe that it is possible to achieve a $1/2-\epsilon$ approximation using a single pass and $\tO(\epsilon^{-3} k)$ space. 
Consider the following simple single-pass algorithm that uses an estimate $z$ of $\opt$ such that  $\opt \leq z \leq (1+\epsilon) \opt$. As with previous algorithms, the basic algorithm in this section also maintains $I\subseteq [m]$, $C\subseteq [n]$ where $I$ corresponds to the ID's of the (at most $k$) sets in the current solution and $C$ is the the union of the corresponding sets. The algorithm proceeds as follows:

\begin{enumerate}
\item Initialize $C=\emptyset$ and $I=\emptyset$.
\item For each set $S$ in the stream: 
\begin{enumerate}
\item If $|S\setminus C| \geq  {z/(2k)}$ and $|I|<k$ then $I\leftarrow I\cup \{ID(S)\}$ and $C\leftarrow C\cup S$.
\end{enumerate}
\end{enumerate}

The described algorithm is a $1/2-\epsilon$ approximation. To see this, if the solution consists of $k$ sets, then the final solution obviously covers at least $z/2\geq \opt/2$ elements. Now we consider the case in which the collection of sets $\S$ chosen by the algorithm contains fewer than $k$ sets. We define $\tilde{S} := S \setminus \cover(\S)$ to be the set of elements in $S$ that are not covered by the final solution.  For each set $S$ in the optimal solution $\O$,   if $S$ is unpicked, then $|\tilde{S}| \leq z/(2k)$. Therefore,
\begin{align*}
\opt  =\left| \bigcup_{S \in \O } \left(S \cap \cover(\S) \right) ~ \right| + \left| \bigcup_{S\in \O \setminus \S} \tilde{S} ~ \right|     \leq \left| \cover( \S) \right| + \sum_{S\in \O \setminus \S} \left| \tilde{S}   \right| &  \leq \left| \cover( \S) \right| + \frac{z}{2}  \\
 &  \leq \left| \cover( \S) \right| + \frac{\opt (1+\epsilon)}{2} 
\end{align*}
and thus $\left| \cover(\S) \right| \geq  \frac{1-\epsilon}{2}\opt$.

We note that the above algorithm uses $O(k \log m + z \log n)$ space but we can use an argument similar to that used in Section \ref{sec:magic} to reduce this to $\tO(\epsilon^{-3} k)$. The only difference is since we need $z$ such that $\opt' \leq z\leq (1+\epsilon)\opt'$ we will guess $v$ in powers of $1+\epsilon/4$ and set $\lambda=16c\epsilon^{-2} k \log m$.  Then Eq. \ref{eq:1} becomes $(1-\epsilon/4)\lambda \leq \opt' \leq (1+\epsilon/4)^2 \lambda$ and hence $z=(1+\epsilon/4)^2 \lambda$ is a sufficiently good estimate.

\begin{theorem} \label{thm:key-result-3}
There exists a single-pass,  $\tO(\epsilon^{-3} k)$ space algorithm that finds a  $1/2-\epsilon$ approximation of \MkSC with high probability.
\end{theorem}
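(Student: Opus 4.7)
My plan is to combine the simple thresholding algorithm sketched just before the theorem statement with the guess/subsample/verify framework of Section \ref{sec:magic}. The algorithm takes a single pass, maintains $I$ and $C$, and adds an incoming set $S$ to $I$ whenever $|S\setminus C|\geq z/(2k)$ and $|I|<k$, where $z$ is supposed to satisfy $\opt\leq z\leq (1+\epsilon)\opt$. The approximation argument is a two-case analysis that I would present verbatim from the discussion preceding the theorem: if $|I|=k$ at the end then $|\cover(I)|\geq k\cdot z/(2k)\geq \opt/2$; otherwise every unpicked set $S$ in an optimal solution $\O$ satisfies $|S\setminus C|<z/(2k)$, so
\[
\opt \leq |\cover(I)|+\sum_{S\in \O\setminus I}|S\setminus \cover(I)| \leq |\cover(I)|+\frac{z}{2}\leq |\cover(I)|+\frac{(1+\epsilon)\opt}{2},
\]
giving $|\cover(I)|\geq \frac{1-\epsilon}{2}\opt$. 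After rescaling $\epsilon$ by a constant we obtain a $(1/2-\epsilon)$ approximation. This naive implementation uses $O(k\log m + z\log n)$ space, which is $\Omega(\opt\log n)$ in the worst case.

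The second step is to shrink the space by subsampling the universe exactly as in Section \ref{sec:magic}. I pick a $2\lambda$-wise independent hash function $h:[n]\to \{0,1\}$ with $\Pr[h(e)=1]=\lambda/v$ and restrict every stream set $S$ to $S'=\{e\in S:h(e)=1\}$. Running the thresholding algorithm on the subsampled instance gives a solution whose coverage in the subsampled universe is at least $(1/2-\epsilon/2)\opt'$, which by Corollary \ref{cor:subsample-opt} translates to a $(1/2-O(\epsilon))$ approximation in the original universe. Because every non-terminated run covers $O(\epsilon^{-2}k\log m)$ subsampled elements, the explicit storage of $C$ and $I$ costs $\tO(\epsilon^{-2}k)$.

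The subtlety that distinguishes this theorem from Theorems \ref{thm:key-result-1} and \ref{thm:key-result-2}, and which I expect to be the main obstacle, is that the $(1/2-\epsilon)$ analysis requires a $(1+\epsilon)$ approximation of $\opt'$ rather than a constant-factor one. I would therefore follow the remark at the end of the excerpt: guess $v$ in powers of $1+\epsilon/4$ instead of powers of $2$, and set $\lambda=16c\epsilon^{-2}k\log m$. Lemma \ref{lem:approximation-hashing} (applied with this larger $\lambda$) then upgrades Eq.~\eqref{eq:1} to $(1-\epsilon/4)\lambda\leq \opt'\leq (1+\epsilon/4)^2\lambda$ with high probability, so the choice $z=(1+\epsilon/4)^2\lambda$ satisfies $\opt'\leq z\leq (1+\epsilon)\opt'$ for the correct guess. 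Finally, among all instantiations that are not terminated (because they never cover more than $O(\epsilon^{-2}k\log m)$ subsampled elements) I pick the best solution by maintaining an $F_0$ sketch of $\bigcup_{i\in I}S_i$ per instantiation via Theorem \ref{thm:F0-approximation}, at an additive $\tO(\epsilon^{-2})$ cost.

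Putting the pieces together, the total space is $O(\log_{1+\epsilon/4}n)=\tO(\epsilon^{-1})$ parallel instantiations, each storing $h$, $I$, $C$, and an $F_0$ sketch in $\tO(\epsilon^{-2}k)$ bits, for an overall bound of $\tO(\epsilon^{-3}k)$. A union bound over the guesses and over the application of Lemma \ref{lem:approximation-hashing} ensures the whole procedure succeeds with high probability, yielding the claimed $(1/2-\epsilon)$ approximation in one pass and $\tO(\epsilon^{-3}k)$ space.
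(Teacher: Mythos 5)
Your proposal is correct and follows essentially the same route as the paper: the same thresholding rule $|S\setminus C|\geq z/(2k)$ with the same two-case analysis, and the same refinement of the Section \ref{sec:magic} framework (guessing $v$ in powers of $1+\epsilon/4$ with $\lambda=16c\epsilon^{-2}k\log m$ so that $z=(1+\epsilon/4)^2\lambda$ is a $(1+\epsilon)$-accurate estimate of $\opt'$), which is exactly where the extra $\epsilon^{-1}$ factor in the space bound comes from. No gaps.
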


%%%
\subsubsection{Group Cardinality Constraint} \label{sec:group-cardinality}

In this version, we consider a version of the problem where each set belongs to a group amongst $\ell$ groups $G_1,G_2,\ldots,G_
\ell$ and we are allowed to pick at most $k_1$ sets from group $G_1$, $k_2$ sets from group $G_2$, and so on. This is also known as the partition matroid constraint. A $1-1/e$ approximation is possible for the offline version of this problem via linear programming \cite{AgeevS04,Srinivasan01}. Furthermore, Chekuri and Kumar showed that the offline greedy algorithm is guaranteed to return a $1/2$ approximation \cite{ChekuriK04}. %This constraint is an $\ell$ matroid and the best previous result in the single-pass data stream model is a $1/(4\ell)$ approximation \cite{ChakrabartiK15}.

\paragraph{Single-pass algorithm.} We first observe that by simply applying the previous $1/2-\epsilon$ approximation algorithm for each group and returning the best solution, we obtain a $(1/2-\epsilon)/\ell$ approximation. The main idea for the improved algorithm is to set a threshold for when to add a set that depends on the group to which this set belongs.

We now present an algorithm that returns a $1/(\ell+1) - \epsilon$ approximation which improves upon \cite{ChakrabartiK14,ChekuriGQ15} for the case $\ell = 2$. The basic algorithm maintains the sets $I_i$ for each $i = 1,2,\ldots,\ell$ where $I_i$ corresponds to the IDs of the sets from group $G_i$ that are in the current solution. Similar to previous algorithms, $C$ is used to keep track of the current coverage. Finally, the algorithm also uses an estimate $z$ of $\opt$ such that  $\opt \leq z \leq (1+\epsilon) \opt$. The detailed algorithm proceeds as follows:
 
\begin{enumerate}
\item Initialize $C=\emptyset$ and $I_i =\emptyset$ for each $i = 1,\ldots,\ell$.
\item For each set $S \in G_i$ in the stream: if

\[
|S\setminus C| \geq  {\frac{z}{(\ell+1)k_i}} \text{ and } |I_i|<k_i,
\] 
then $I_i \leftarrow I_i\cup \{ID(S)\}  ~~\mbox{ and }~~ C\leftarrow C\cup S \ . $
\end{enumerate}

If there exists a group $G_i$ in which $k_i$ sets are selected,  then it is clear that the solution covers  at least $z/(\ell+1)$ elements.
On the other hand, suppose that for all groups $G_i$, fewer than $k_i$ sets are selected. As before, $\S$ and $\cover(\S)$ are the collection of sets in the solution and their union respectively. Again, we define $\tilde{S} := S \setminus \cover(\S)$. Furthermore, let $\O_i$ denote the sets in $G_i$ that are also in the optimal solution, i.e., $\O_i = \O \cap G_i$. We have 
\begin{align*}
\opt  =\left| \bigcup_{S \in \O } \left(S \cap \cover(\S) \right) ~ \right| + \left| \bigcup_{S\in \O \setminus \S} \tilde{S} ~ \right|  &  \leq \size{\cover(\S)} + \sum_{i=1}^\ell \sum_{S \in \O_i \setminus \S} \left| ~ \tilde{S} ~ \right| \\ 
 & \leq \size{\cover(\S)} + \sum_{i=1}^\ell \sum_{S \in \O_i \setminus \S} \frac{z}{(\ell+1)k_i}  \leq  \size{\cover(\S)} + \frac{\ell \cdot z}{\ell+1} ~.
\end{align*}
Therefore,
\[
\size{\cover(\S)} \geq  \opt - \frac{\ell \cdot z}{\ell+1} \geq  \paren{ \frac{1}{\ell+1}-\epsilon } \opt ~.
\]
Let $k = k_1 + k_2 + \ldots + k_\ell$.  The above algorithm uses $O(k \log m + z \log n)$ space but we can use an argument similar to that used in Sections \ref{sec:magic} and \ref{sec:1/2-approx} to reduce this to $\tO(\epsilon^{-3} k)$. We summarize the result as the following theorem.

\begin{theorem}\label{thm:sub-result-1}
There exists a single-pass, $\tO(\epsilon^{-3} k)$ space algorithm that finds a $1/(\ell+1) - \epsilon$ approximation of group cardinality constraint \MkSC with high probability.
\end{theorem}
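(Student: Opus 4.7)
The plan is to verify the two parts of the claim separately: the $1/(\ell+1)-\epsilon$ approximation factor of the basic algorithm under the assumption that a good estimate $z$ of $\opt$ is available, and then the space reduction from $O(k\log m + z\log n)$ to $\tO(\epsilon^{-3}k)$ via the sampling/guessing/sketching framework of Section~\ref{sec:magic}.

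For the approximation guarantee I would follow the two-case split already sketched in the text. If some group $G_i$ saturates, i.e.\ $|I_i|=k_i$, then each of those $k_i$ admitted sets contributed at least $z/((\ell+1)k_i)$ new elements, so $|\cover(\S)|\ge z/(\ell+1)\ge \opt/(\ell+1)$. Otherwise, for every optimal set $S\in\O_i$ that was rejected we must have $|S\setminus\cover(\S)|<z/((\ell+1)k_i)$, and the subadditive union bound yields
\[
\opt \le |\cover(\S)| + \sum_{i=1}^{\ell}\sum_{S\in\O_i\setminus\S}\frac{z}{(\ell+1)k_i} \le |\cover(\S)| + \frac{\ell z}{\ell+1},
\]
so $|\cover(\S)|\ge (1/(\ell+1)-\epsilon)\opt$ after substituting $z\le(1+\epsilon)\opt$. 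Writing $k=k_1+\cdots+k_\ell$, the basic algorithm stores at most $k$ set IDs and the cover $C$ of size at most $z$, giving the preliminary $O(k\log m + z\log n)$ bound.

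For the space reduction I would invoke the scheme of Section~\ref{sec:magic} with the tighter parameterization used in Section~\ref{sec:1/2-approx}, since here we need $\opt'\le z\le(1+\epsilon)\opt'$ rather than the looser $\opt'\le z\le 4\opt'$ tolerated by the greedy analyses. Specifically, try $O(\log n)$ geometric guesses $v\in\{(1+\epsilon/4)^j\}$ for $\opt$ and, for each guess, run the basic algorithm on the subsampled universe defined by a $2\lambda$-wise independent hash function with $p=\lambda/v$ and $\lambda=\Theta(\epsilon^{-2}k\log m)$. Terminate any instance whose $|C|$ exceeds a constant times $\lambda$. By Corollary~\ref{cor:subsample-opt}, for the guess satisfying $\opt/2\le v\le\opt$ we have $\opt'\in(1\pm\epsilon/4)\lambda$, hence $z=(1+\epsilon/4)^2\lambda$ is a valid estimate on the subsampled instance; the algorithm then outputs a solution whose subsampled coverage is at least $\opt'/(\ell+1)-O(\epsilon)\opt'$, which by the second part of Corollary~\ref{cor:subsample-opt} corresponds to a $1/(\ell+1)-O(\epsilon)$ approximation of $\opt$ in the original universe. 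To choose the best solution across the surviving instantiations, I would maintain an $F_0$-sketch of the pre-subsampling cover for each instance using Theorem~\ref{thm:F0-approximation} and output the candidate with the largest estimated coverage.

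The main obstacle, and essentially the only place where the group structure interacts with the analysis, is checking that Lemma~\ref{lem:approximation-hashing} still applies after we restrict the family of feasible $k$-tuples to those respecting the group cardinality constraints. This is a mild strengthening rather than a complication: the number of feasible tuples is bounded by $\binom{m}{k}\le m^k$, the hash function's independence budget $\lceil\epsilon vp\rceil$ is exactly what drives the Chernoff-with-limited-independence bound of Theorem~\ref{thm:chernoff-hashing} below $1/m^{10k}$, and the union bound over feasible $k$-tuples goes through unchanged. Rescaling $\epsilon$ by a constant factor absorbs all the $O(\epsilon)$ slack introduced by the subsampling, guessing, and sketching layers, yielding the claimed $\tO(\epsilon^{-3}k)$ space and single-pass $1/(\ell+1)-\epsilon$ approximation.
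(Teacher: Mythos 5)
Your proposal is correct and follows essentially the same route as the paper: the same two-case analysis (some group saturates versus no group saturates) with the identical inequality $\opt \le |\cover(\S)| + \ell z/(\ell+1)$, and the same space reduction via the guessing/subsampling/sketching framework with the tighter $(1+\epsilon/4)$-granularity guesses needed because the analysis requires $z \le (1+\epsilon)\opt$. Your added remark that the union bound in Lemma~\ref{lem:approximation-hashing} covers the group-feasible $k$-tuples (a subfamily of all $\binom{m}{k}$ tuples) is a correct and worthwhile detail that the paper leaves implicit.
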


We note that for single pass, the algorithm of \cite{ChakrabartiK14,ChekuriGQ15} combining with our framework in Sections \ref{sec:magic} yields a $1/4-\epsilon$ approximation for matroid constraints. The algorithm above gives a better approximation for partition matroid constraint when the number of groups $\ell = 2$.

\begin{theorem}
There exists a single-pass, $\tO(\epsilon^{-2} k^2)$ space algorithm that finds a $1/4-\epsilon$ approximation of group cardinality constraint \MkSC with high probability.
\end{theorem}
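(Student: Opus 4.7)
The plan is to combine the known single-pass $1/4$-approximation for monotone submodular maximization under matroid constraints of Chakrabarti--Kale \cite{ChakrabartiK14} (and Chekuri--Gupta--Quanrud \cite{ChekuriGQ15}) with the ``guess, subsample, verify'' framework of Section \ref{sec:magic}. Since a partition matroid with rank $k = k_1+\ldots+k_\ell$ is a special matroid and the coverage function is monotone submodular, running their algorithm directly achieves a $1/4$ approximation in a single pass. The algorithm maintains at most $k$ sets in its current solution and, upon reading each new set, performs swap decisions using marginal gains of the coverage function evaluated against the current solution.

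Next, I would apply the subsampling scheme of Section \ref{sec:magic}: guess $v$ with $\opt/2 \le v \le \opt$ in powers of $1+\epsilon/4$ (taking $O(\log n)$ parallel instantiations), pick a $2\lambda$-wise independent hash function $h:[n]\to\{0,1\}$ with $\Pr[h(e)=1]=\lambda/v$ for $\lambda=\Theta(\epsilon^{-2}k\log m)$, and run the Chakrabarti--Kale algorithm on the subsampled sets $S' = \{e \in S : h(e) = 1\}$. By Corollary \ref{cor:subsample-opt}, a $1/4$ approximation on the subsampled instance translates into a $(1/4 - \epsilon)$ approximation on the original after rescaling $\epsilon$, while the $F_0$-sketch-based termination and verification steps from Section \ref{sec:magic} handle the guessing overhead with only a $\polylog$ blowup.

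The space accounting is the crux. Applying Lemma \ref{lem:approximation-hashing} to the singleton collection $\{S_i\}$ (noting $|S_i| \le \opt$) and union-bounding over all $m$ sets yields $|S_i'| = O(\epsilon^{-2}k\log m)$ with high probability for every input set. Each such subsampled set therefore costs $\tO(\epsilon^{-2}k)$ bits to store explicitly. The Chakrabarti--Kale algorithm retains at most $k$ sets (plus the cover $C$ and a few auxiliary counters), so keeping these subsampled sets in memory, together with the hash seed and the $F_0$ sketches used to estimate the true coverage of each candidate solution at the end, sums to $\tO(\epsilon^{-2}k^2)$ bits across all guesses.

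The main obstacle will be verifying that the swap-based algorithm of \cite{ChakrabartiK14} can actually be \emph{executed} in this budget: every arriving set must be tested for potential swaps against each of the (up to $k$) stored sets, and the marginal-gain comparisons for coverage reduce to computing $|S' \setminus (C \cup T)|$ or $|S' \cup (C \setminus T)| - |C|$ for a stored set $T$, which is feasible since $C$ and each stored $T$ are explicit subsets of the subsampled universe of size $O(\epsilon^{-2}k\log m)$. The submodular analysis of \cite{ChakrabartiK14} applies unchanged to the subsampled coverage function, which remains monotone submodular over the same partition matroid, and Corollary \ref{cor:subsample-opt} completes the approximation bound.
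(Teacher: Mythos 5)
Your proposal is correct and follows exactly the route the paper intends: the paper's entire justification for this theorem is the one-sentence remark that the swap-based matroid algorithm of \cite{ChakrabartiK14,ChekuriGQ15} combined with the framework of Section \ref{sec:magic} yields the bound. Your writeup supplies the details the paper omits, and the key space accounting --- each subsampled set has size $O(\epsilon^{-2}k\log m)$ with high probability (since $|S_i|\leq \opt\leq 2v$), so storing the at most $k$ retained sets explicitly costs $\tO(\epsilon^{-2}k^2)$ --- is precisely where the $k^2$ in the theorem comes from.
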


\paragraph{Multiple-pass algorithm.}  Next, we demonstrate a $1/2-\epsilon$ approximation that uses $O(\epsilon^{-1} \log (k/\epsilon))$ passes. The idea is similar to the algorithm in Section \ref{sec:thesholding-greedy} where we pick a set if its contribution is above a threshold. We decrease the threshold by a factor $(1+\epsilon)$ after each pass. The main difference is to not pick a set if that violates the group constraint. Here, we assume that $\opt \leq z \leq 4\opt$. The detailed algorithm proceeds as follows:

\begin{enumerate}
\item Initialize $C \leftarrow \emptyset$ and $I_i =\emptyset$ for each $i = 1,\ldots,\ell$.
\item For $j=1$ to $\roundup{\log_{\alpha}(10 \cdot k/\epsilon)}$ where $\alpha = 1+\epsilon$
    \begin{enumerate}
        \item Make a pass over the stream. For each set $S \in G_i$:
        \begin{enumerate}
        		\item If $\size{S \setminus C} \geq z/ \alpha^j$ and $\size{I_i} < k_i$, then $I_i \leftarrow I_i\cup \{ID(S)\} ~~\mbox{ and }~~ C\leftarrow C\cup S \ .$
         \end{enumerate}
    \end{enumerate}
\end{enumerate}

Recall that $k = k_1+\ldots+k_\ell$. Suppose the algorithm picks $k$ sets $S_1,S_2,\ldots,S_k$ in that order. If the algorithm picks fewer than $k$ sets, at the end, we could simply add dummy empty sets to the solution; thus, we can assume that the algorithm picks exactly $k_i$ sets from each group $G_i$. Consider an optimal solution $\O = \{ O_1,\ldots,O_k \}$ and a bijection $\pi : [k] \rightarrow [k]$ that satisfies the following: 
\begin{itemize}
\item If $\pi(i) = j$, then $S_i$ and $O_j$ belong to the same group. 
\item If $S_i$ is $O_j$, then $\pi(i) = j$.
\end{itemize}
When $S_i \in G_t$ was picked in the $j$th iteration for $j > 1$, by the second property of $\pi$, we deduce that $O_{\pi(i)}$ had not been picked in the $(j-1)$th iteration.  Furthermore, the first property of $\pi$ ensures that since we picked $S_i$ in the $j$th iteration, we know that $O_{\pi(i)}$ was available to pick in the $(j-1)$th iteration; however, its contribution was smaller than $z/(k \alpha^{j-1})$. Let $\hat{S}_i := S_i \setminus (S_1 \cup \ldots S_{i-1})$ and $\tilde{O}_i := O_i \setminus \cover(\S)$.

Therefore,
\begin{align*}
\size{\tilde{O}_{\pi(i)}} < \frac{z}{\alpha^{j-1}} = \alpha \cdot \frac{z}{\alpha^j} \leq \alpha \size{\hat{S}_i} ~.
\end{align*}

In the case $j=1$, obviously, $\size{\hat{S_i}} \geq z/\alpha \geq \size{\tilde{O}_{\pi(i)}} \cdot 1/\alpha$. 

Finally, in the case that $S_i$ is a dummy set that was added at the end, then $O_{\pi(i)}$ was not picked during the last pass (even though it was available to pick). Hence $\size{\tilde{O}_{\pi(i)} } \leq \epsilon z/(10k)$. Suppose the algorithm picked $y$ sets $S_1,\ldots,S_y$ that are not dummy sets. We have 
\begin{align*}
\size{\cover(\O)} - \size{\cover(\S)} & \leq \sum_{i=1}^y \size{O_{\pi(i)} \setminus \cover(\S)} + \sum_{i=y+1}^k \size{O_{\pi(i)} \setminus \cover(\S)} \\
& \leq  \sum_{i=1}^y \alpha \size{\hat{S}_i}   + \epsilon \size{\cover(\O)} \\
(2+\epsilon) \size{\cover(\S)} & \geq (1-\epsilon) \size{\cover(\O)} ~.
\end{align*}

%\begin{align*}
%\size{\cover(\S)} = \sum_{i=1}^k \size{\hat{S}_i} \geq \frac{1}{\alpha} \sum_{1}^k \size{\tilde{O}_{\pi(i)}} \geq  \frac{1}{\alpha}  \paren{\opt - \size{\cover(\S)}} ~.
%\end{align*}
Therefore, $\size{\cover(\S)} \geq (1-\epsilon)\opt/(2+\epsilon)$. Repeating the subsampling argument in Section  \ref{sec:magic}, we have the following.

\begin{theorem}\label{thm:sub-result-2}
There exists a $\tO(\epsilon^{-2} k)$ space algorithm that finds a $1/2 -\epsilon$ approximation of group cardinality constraint \MkSC in $O(\epsilon^{-1} \log (k/\epsilon))$ passes with high probability .
\end{theorem}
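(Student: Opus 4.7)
The plan is to run a multi-pass decreasing-threshold greedy, analogous to the one in Section \ref{sec:thesholding-greedy}, but enforcing the group quotas. Given a guess $z$ with $\opt \le z \le 4\opt$, set $\alpha = 1+\epsilon$ and on pass $j = 1,\ldots,\lceil \log_\alpha(10k/\epsilon)\rceil$ add an arriving set $S \in G_t$ to the solution iff $|S \setminus C| \ge z/\alpha^j$ and strictly fewer than $k_t$ sets from $G_t$ have been picked so far. This immediately gives $O(\epsilon^{-1}\log(k/\epsilon))$ passes and basic space $O(k\log m + z\log n)$.

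For the analysis, suppose the algorithm selects sets $S_1,\ldots,S_y$ in order and pad with dummy empty sets $S_{y+1},\ldots,S_k$ so that exactly $k_t$ slots are filled in each $G_t$. Fix an optimal solution $O_1,\ldots,O_k$ and construct a bijection $\pi$ between slot indices and optimal indices that (i) matches within groups and (ii) satisfies $\pi(i)=j$ whenever $S_i = O_j$; such a $\pi$ exists by pairing off the common sets within each group first and then completing the matching arbitrarily. Writing $\hat S_i := S_i \setminus (S_1\cup\cdots\cup S_{i-1})$ and $\tilde O_j := O_j \setminus \cover(\S)$, the crucial inequality $|\tilde O_{\pi(i)}| \le \alpha|\hat S_i|$ holds for every non-dummy $i$: if $S_i$ was picked on pass $j\ge 2$, then by property (ii), $O_{\pi(i)}$ had not already been chosen, and since the group cap was not yet hit (we later pick $S_i$ from the same group), $O_{\pi(i)}$ was available but failed the threshold on pass $j-1$, so its marginal was $< z/\alpha^{j-1}=\alpha\cdot z/\alpha^j \le \alpha|\hat S_i|$, and $|\tilde O_{\pi(i)}|$ only shrinks further; for $j=1$ the inequality $|\hat S_i|\ge z/\alpha$ dominates $\opt\ge |\tilde O_{\pi(i)}|$. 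For each dummy index $i$, $O_{\pi(i)}$ was likewise available on the final pass but missed the threshold, giving $|\tilde O_{\pi(i)}|\le \epsilon z/(10k)$, so the dummy contributions sum to $\le \epsilon\opt$. Combining $\opt - |\cover(\S)| \le \sum_i |\tilde O_{\pi(i)}|$ with these bounds yields $(2+\epsilon)|\cover(\S)| \ge (1-\epsilon)\opt$, and rescaling $\epsilon$ gives a $1/2-\epsilon$ ratio.

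To reach $\tO(\epsilon^{-2} k)$ space, I would plug the basic algorithm into the guess-subsample-verify framework of Section \ref{sec:magic}: parallelize over $O(\log n)$ geometric guesses of $v \in [\opt/2,\opt]$, subsample the universe at rate $p = \lambda/v$ with $\lambda = \Theta(\epsilon^{-2}k\log m)$ using a $2\lambda$-wise independent hash, abort any instance whose covered set exceeds $O(\lambda)$, and rank survivors by $F_0$-sketch estimates of their true coverage. Lemma \ref{lem:approximation-hashing} and Corollary \ref{cor:subsample-opt} transfer the approximation guarantee back to the original instance at a further $O(\epsilon)$ loss.

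The main obstacle is arranging the bijection $\pi$ so that condition (ii) actually licenses the ``$O_{\pi(i)}$ was available but not picked'' step; without it, an optimal set that had already been taken earlier in the stream could be spuriously blamed on a later $S_i$ and the per-pass threshold bound would fail. Once $\pi$ is fixed correctly, everything else is bookkeeping, and the $\epsilon z/(10k)$ cushion on the final pass is exactly what absorbs the dummy slots in the group-constrained setting.
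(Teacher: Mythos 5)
Your proposal is correct and follows essentially the same route as the paper: the same decreasing-threshold multi-pass greedy with per-group caps, the same dummy-padding and within-group bijection $\pi$ with the two matching properties, the same charging inequality $|\tilde O_{\pi(i)}| \le \alpha |\hat S_i|$ plus the $\epsilon z/(10k)$ bound for dummy slots, and the same reduction to $\tO(\epsilon^{-2}k)$ space via the guess--subsample--verify framework of Section~\ref{sec:magic}. No gaps.
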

%%%%
%
%
%\begin{theorem}
%There exists a single-pass, $\tO(\epsilon^{-3} k)$ space algorithm that finds a $1/2-\epsilon$ approximation of group-cardinality constraint \MkSC with high probability.
%\end{theorem}
%\begin{proof}
%Suppose the algorithm picks $k$ sets $S_1,S_2,\ldots,S_k$. Consider an optimal solution $\O = \{ O_1,\ldots,O_k \}$ and the bijection $\pi : [k] \rightarrow [k]$ that satisfies the following: If $S_i$ is $O_j$, then $\pi(i) = j$. When $S_i$ was picked in the $j$th iteration, it must be the case that $O_{\pi(i)}$ had not been picked before according to the definition of $\pi$.  Let $\hat{S}_i = S_i \setminus (S_1 \cup \ldots S_{i-1})$ and $\tilde{O}_i := O_i \setminus \cover(\S)$.
%
%Therefore,
%\begin{align*}
%\size{\tilde{O}_{\pi(i)}} < \frac{z}{k \alpha^{j-1}} = \alpha \cdot \frac{z}{k \alpha^j} \leq \alpha \size{\hat{S}_i} ~.
%\end{align*}
%We deduce that 
%\begin{align*}
%\size{\cover(\S)} = \sum_{i=1}^k \size{\hat{S}_i} \geq \frac{1}{\alpha} \sum_{1}^k \size{\tilde}
%\end{align*}
%\end{proof}

\subsubsection{Budgeted Maximum Coverage}

In this variation, each set $S$ has a cost $w_S \in [0,L]$. The problem asks to find the collection of sets whose total cost is at most $L$ that covers the most number of distinct elements. For $I\subseteq [n]$, we use $w(I)$ to denote $\sum_{i\in I}w_{S_i}$.

We present the algorithm assuming knowledge of an estimate $z$ such that $\opt \leq z \leq (1+\epsilon)\opt$; this assumption can be removed by running the algorithm for guesses $1, (1+\epsilon), (1+\epsilon)^2, \ldots$ for $z$ and returning the best solution found. The basic algorithm maintains $I\subseteq [m]$, $C\subseteq [n]$ where $I$ corresponds to the ID's of the (at most $k$) sets in the current solution and $C$ is the the union of the corresponding sets. The algorithm proceeds as follows:

\begin{enumerate}
\item Initialize $C=\emptyset$ and $I=\emptyset$
\item For each set $S$ in the stream:  
\begin{enumerate}
\item If
\[
|S\setminus C| \geq \frac{2z}{3} \cdot \frac{w_S}{L}~,
\] 
then:
\begin{enumerate}
\item If $w(I)+w_S> L$: Terminate and return:
\[
I\leftarrow \begin{cases}
I & \mbox{ if } |C|\geq |S| \\
\{ID(S)\} & \mbox{ if } |C|< |S| \\
\end{cases}
\]
\item $I\leftarrow I\cup \{ID(S)\}$ and $C\leftarrow C\cup S$.
\end{enumerate}
\end{enumerate}
\end{enumerate}

\begin{lemma}
If the clause in line $2ai$ is never satisfied, then the algorithm returns a $1/3-\epsilon$ approximation. 
\end{lemma}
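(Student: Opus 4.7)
The plan is to relate the coverage of the algorithm's solution $\S$ to the coverage of an optimal budgeted solution $\O$ by examining, for each set $S \in \O$ that was \emph{not} picked, why the algorithm declined to add it. Since the termination clause in step 2(a)i is assumed never to trigger, the algorithm processes the entire stream without early exit, and the only reason $S \in \O$ could fail to be added (and fail to be in $\S$) is that the inequality $|S \setminus C| \geq (2z/3)(w_S/L)$ failed at the moment $S$ arrived.

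Concretely, I would set up the following. Let $C_S$ denote the set of covered elements at the time $S$ is processed in the stream. For each $S \in \O \setminus \S$, we have $|S \setminus C_S| < (2z/3)(w_S/L)$. Since covered elements only accumulate over time, $C_S \subseteq \cover(\S)$, so in particular $|S \setminus \cover(\S)| \leq |S \setminus C_S| < (2z/3)(w_S/L)$. Then I would decompose
\begin{align*}
\opt = |\cover(\O)|
&\leq |\cover(\S)| + \sum_{S \in \O} |S \setminus \cover(\S)| \\
&= |\cover(\S)| + \sum_{S \in \O \setminus \S} |S \setminus \cover(\S)| \\
&< |\cover(\S)| + \frac{2z}{3L} \sum_{S \in \O \setminus \S} w_S \\
&\leq |\cover(\S)| + \frac{2z}{3L} \cdot w(\O) \leq |\cover(\S)| + \frac{2z}{3},
\end{align*}
using $w(\O) \leq L$ in the last step. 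Finally, since $z \leq (1+\epsilon)\opt$, rearranging gives
\[
|\cover(\S)| \geq \opt - \frac{2(1+\epsilon)}{3}\opt = \left(\frac{1}{3} - \frac{2\epsilon}{3}\right)\opt \geq \left(\frac{1}{3}-\epsilon\right)\opt,
\]
which is the desired approximation.

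The main subtlety is the monotonicity observation $C_S \subseteq \cover(\S)$, which lets us convert the \emph{historical} rejection inequality $|S \setminus C_S| < (2z/3)(w_S/L)$ into a bound involving the \emph{final} coverage $\cover(\S)$; without this, the threshold at rejection time would not directly limit how much $S$ can contribute beyond $\cover(\S)$. Everything else is routine: summing the per-set bounds and using the cost constraint $w(\O) \leq L$ together with the hypothesis $\opt \leq z \leq (1+\epsilon)\opt$. Note that we do not need to separately handle sets in $\O \cap \S$, since for those $|S \setminus \cover(\S)| = 0$.
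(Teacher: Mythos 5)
Your proof is correct and follows essentially the same route as the paper's: both argue that any unpicked optimal set $S$ must have failed the threshold test on arrival, use monotonicity of $C$ to bound $|S\setminus \cover(\S)| < \frac{2z}{3}\cdot\frac{w_S}{L}$, sum these bounds using $w(\O)\leq L$, and conclude $|\cover(\S)|\geq \frac{1-2\epsilon}{3}\opt$. The only cosmetic difference is that the paper writes the decomposition of $\opt$ as a disjoint union of $\bigcup_{S\in\O}(S\cap\cover(\S))$ and $\bigcup_{S\in\O\setminus\S}\tilde{S}$ rather than your one-sided inequality, which amounts to the same bound.
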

\begin{proof}
Suppose the collection of sets chosen by the algorithm is $\S$. 
As before, we define $\tilde{S} := S \setminus \cover(\S)$ to be the set of elements in $S$ that are not covered by the final solution. 
For each set $S$ in the optimal solution $\O$, if $S$ is unpicked, then $|\tilde{S}| \leq 2z/3 \cdot w_S/L$. Therefore,
\begin{align*}
\opt  = \left| \bigcup_{S \in \O } \left(S \cap \cover(\S) \right) ~ \right| + \left| \bigcup_{S\in \O \setminus \S} \tilde{S} ~ \right|    \leq \left| \cover( \S) \right| + \sum_{S\in \O \setminus \S} \left| \tilde{S}   \right|  & \leq \left| \cover( \S) \right| + \frac{2z}{3} \\
& \leq \left| \cover( \S) \right| + \frac{2 \opt (1+\epsilon)}{3} 
\ ,
\end{align*}
and thus $\left| \cover(\S) \right| \geq  \frac{1-2\epsilon}{3}\opt$.
\end{proof}

\begin{lemma}
If the clause in line $2ai$ is satisfied at some point, then the algorithm returns a $1/3$ approximation. 
\end{lemma}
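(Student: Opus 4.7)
The plan is to show that at the moment the clause in line 2ai fires, the coverage $|C|$ together with $|S|$ for the ``trigger'' set $S$ is large enough that returning the larger of the two already yields a $1/3$ approximation. The key quantitative observation is that every set added by the algorithm, and also the triggering set $S$, individually contributes coverage proportional to its cost.

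First, I would record an invariant maintained throughout the run: at every point, if $\S$ denotes the collection of sets currently indexed by $I$, then
\[
|C| = |\cover(\S)| \geq \frac{2z}{3L} \cdot w(I),
\]
since the threshold condition $|S'\setminus C|\geq \frac{2z}{3}\cdot \frac{w_{S'}}{L}$ was met every time a set $S'$ was added, and these new-element counts are disjoint by construction, so they sum directly into $|C|$. Next, I would observe that the set $S$ which triggers the clause must itself satisfy the threshold test, i.e.\ $|S\setminus C|\geq \frac{2z}{3}\cdot \frac{w_S}{L}$.

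Combining these two facts and using that the clause fires precisely when $w(I)+w_S > L$,
\[
|C| + |S\setminus C| \geq \frac{2z}{3L}\bigl( w(I) + w_S \bigr) > \frac{2z}{3L} \cdot L = \frac{2z}{3}.
\]
Hence $\max(|C|,\, |S\setminus C|) \geq \frac{z}{3}$, and since $|S|\geq |S\setminus C|$, we conclude $\max(|C|,|S|)\geq z/3 \geq \opt/3$, using $z\geq \opt$. The algorithm's final choice returns $I$ when $|C|\geq |S|$ and $\{ID(S)\}$ when $|S|>|C|$, so the returned coverage is exactly $\max(|C|,|S|)$, which is at least $\opt/3$.

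The only mildly subtle point is the switch from $|S\setminus C|$ to $|S|$ in the last step, but this is free because $|S|\geq |S\setminus C|$; the rest is essentially a direct accounting argument, so I do not anticipate any real obstacle. Notably, unlike the previous lemma, this bound does not incur an $\epsilon$ loss because it does not rely on the upper bound $z\leq (1+\epsilon)\opt$, only on $z\geq \opt$.
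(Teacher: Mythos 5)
Your proof is correct and follows essentially the same route as the paper: the paper's one-line inequality $|S\setminus C| + |C| \geq \frac{2z}{3}\cdot\frac{w_S+w(I)}{L} \geq \frac{2z}{3}$ implicitly uses exactly the invariant $|C|\geq \frac{2z}{3L}w(I)$ that you state explicitly, together with the threshold bound on the triggering set and the budget-violation condition. Your write-up simply makes the disjoint-contribution accounting and the final $\max(|C|,|S|)\geq z/3\geq \opt/3$ step explicit where the paper says ``the claim then follows immediately.''
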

\begin{proof}
Suppose the clause is satisfied when the set $S$ is being considered. Then
\[ 
\left| S \setminus C\right| + \left|C \right| \geq \frac{2z}{3}  \cdot \frac{w_S + w(I) }{L} \geq \frac{2z}{3}
\]
where we used the fact that $w_S + w(I) > L$. The claim then follows immediately. 
\end{proof}

The algorithm needs to store the IDs of the sets in the solution as well as the current coverage $C$. Therefore, it uses $\tO(m+n)$ space.

\begin{theorem}\label{theorem:budget}
There exists a single-pass, $\tO(\epsilon^{-1} (m+n))$-space algorithm that finds a $1/3-\epsilon$ approximation of budgeted \MkSC.
\end{theorem}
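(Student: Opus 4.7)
The two lemmas preceding the theorem already establish the approximation guarantee conditional on the existence of a good estimate $z$ with $\opt \le z \le (1+\epsilon)\opt$: Lemma 2 handles the branch where the budget would be exceeded (line $2ai$ fires) and Lemma 1 handles the branch where the algorithm terminates normally. My plan is therefore to stitch together the two lemmas by removing the a priori knowledge of $z$ via a standard geometric guessing scheme, and then to read off the space bound.

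The plan is to run $T = O(\epsilon^{-1} \log n)$ parallel instantiations of the basic algorithm in lockstep during the single pass, one for each guess $z_t = (1+\epsilon)^t$ for $t = 0, 1, \ldots, T$. Since $\opt \le n$, some guess $t^*$ satisfies $\opt \le z_{t^*} \le (1+\epsilon)\opt$. At the end of the stream we return the solution $I_t$ (together with its coverage $C_t$) that maximizes $|C_t|$ over the $T$ instances. By the two lemmas, instance $t^*$ returns a feasible solution covering at least $(1/3-\epsilon)\opt$ elements (handling both the "never satisfied" and "satisfied" branches of line $2ai$), so taking the maximum over all instances certainly produces a $(1/3-\epsilon)$-approximation. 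A very mild point to note is that if line $2ai$ fires for some instance, that instance halts; we simply keep whichever of the two candidate sets $I$ it returns and retain it for the final comparison against the other instances.

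For the space bound, a single instance stores the identifier set $I \subseteq [m]$ and the coverage $C \subseteq [n]$, which takes $O(m \log m + n \log n) = \tO(m+n)$ bits; the cost $w(I)$ is a single counter. Running $T = O(\epsilon^{-1} \log n)$ such instances in parallel therefore uses $\tO(\epsilon^{-1}(m+n))$ bits in total. To avoid guessing $z$ beyond $n$, we cap $T$ at $\lceil \log_{1+\epsilon} n \rceil$; since the size of any single set is at most $n$, this range covers all possible values of $\opt$.

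I do not expect a serious obstacle here: the analytic work is done by Lemmas preceding the theorem, and the only thing left is the standard "geometric guessing + keep the best candidate" reduction, together with a direct accounting of the storage of $I$ and $C$ per instance. The one subtlety worth a sentence in the proof is arguing that we may safely evaluate the candidate solutions by $|C_t|$ (which is stored exactly, not estimated), so unlike some of the earlier results in the paper no $F_0$-sketching is required in the comparison step.
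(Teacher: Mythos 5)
Your proposal is correct and follows essentially the same route as the paper: the paper likewise removes the assumption on $z$ by running parallel instances for guesses $1, (1+\epsilon), (1+\epsilon)^2, \ldots$, invokes the two preceding lemmas for the correct guess, and charges $\tO(m+n)$ space per instance for storing $I$ and $C$ explicitly. Your extra remarks (capping the guesses at $\lceil \log_{1+\epsilon} n\rceil$, comparing candidates by the exactly stored $|C_t|$, and noting feasibility of the terminated instances) are all consistent with, and slightly more explicit than, the paper's treatment.
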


%!TEX root = main.tex

\section{Algorithms for Maximum $k$-Vertex Coverage}

In this section, we present algorithms for the maximum $k$-vertex coverage problem. We present our results in terms of hypergraphs for full generality. The generalization to hypergraphs can also be thought of as a natural ``hitting set" variant of maximum coverage, i.e., the stream consists of a sequence of sets and we want to pick $k$ elements in such a way to maximize the number of sets that include a picked element.

\paragraph{Notation.} Given a hypergraph $G$ and a subset of nodes $S$, we define $\coverage_G(S)$ to be the number of edges that contain at least one node in $S$. Recall that the maximum $k$-vertex coverage problem is to approximate the maximum value of $\coverage_G(S)$ over all sets $S$ containing $k$ nodes. We use $E_G$ and $V_G$ to denote the set of edges and nodes of the hypergraph $G$ respectively. 

The size of a cut $(S,V \setminus S)$ in a hypergraph $G$, denoted as $\delta_G(S)$, is defined as the number of hyperedges that contain at least one node in both $S$ and $V \setminus S$. In the case that $G$ is weighted, $\delta_G(S)$ denotes the total weight of the cut. 
A core idea to our approach is to use \emph{hypergraph sparsification}:

\begin{definition}[$\epsilon$-sparsifier]
Given a hypergraph $G=(V,E)$, we say that a weighted subgraph $H =(V,E')$ is an $\epsilon$-sparsifier for $G$ if for all $S\subseteq V$, $\delta_G(S)\approx_\epsilon \delta_{H}(S)$. 
\end{definition}

Any graph on $N$ nodes has an $\epsilon$-sparsifier with only $\tilde{O}(\epsilon^{-2} N)$ edges \cite{SpielmanT11}. Similarly, any hypergraph  in which the maximum size of the hyperedges is bounded by $d$ (rank $d$ hypergraphs) has an $\epsilon$-sparsifier with only $\tilde{O}(\epsilon^{-2} d N)$ edges.
Furthermore, an $\epsilon$-sparsifier can be constructed in the dynamic graph stream model using one pass and $\tO(\epsilon^{-2} d N)$ space \cite{GuhaMT15,KapralovLMMS14}. 

First, we show that it is possible to approximate all the coverages by constructing a sparsifier of a slightly modified graph. In particular, we construct the sparsifier $H$ of the graph $G'$ with an extra node $v$, i.e., $V_{G'} = V_G \cup \{  v \}$, and for every hyperedge $e \in E_G$, we put the hyperedge $e \cup \{ v \}$ in $E_{G'}$. It is easy to see that for all $S$ that is a subset of $V_G$, we have $\cover_G(S) = \delta_{G'}(S).$ Therefore, it is immediate that we could $1+\epsilon$ approximate all the coverages in $G$ by constructing the sparsifer of $G'$. 
\begin{theorem}\label{thm:vertex-cover-general-2}
There exists a single-pass, $\tO(\epsilon^{-2} d N)$-space algorithm that finds a $1 -\epsilon$ approximation of \MkVC of rank $d$ hypergraphs with high probability.
\end{theorem}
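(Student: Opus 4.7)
The approach is the one sketched immediately before the statement. The plan is: (i) transform the dynamic stream for $G$ on the fly into a stream for an auxiliary hypergraph $G'$, (ii) build a cut-sparsifier of $G'$ using the known dynamic-graph-stream machinery, and (iii) read off approximate $k$-vertex coverages as sparsifier cuts in post-processing.

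First I would construct $G'$ implicitly. Fix a fresh dummy vertex $v \notin V_G$, and for each edge insertion or deletion of a hyperedge $e$ in the input stream, emit the corresponding insertion or deletion of $e \cup \{v\}$ into a simulated stream over vertex set $V_G \cup \{v\}$. This is a stateless token-by-token rewrite and costs no extra space. The simulated hypergraph $G'$ has $N+1$ vertices and rank at most $d+1$. Feeding this simulated stream to the dynamic-stream hypergraph sparsifier of \cite{GuhaMT15,KapralovLMMS14} yields, in a single pass and $\tO(\epsilon^{-2}(d+1)(N+1)) = \tO(\epsilon^{-2} dN)$ space, a weighted subgraph $H$ of $G'$ satisfying $\delta_H(S) \approx_\epsilon \delta_{G'}(S)$ for every $S \subseteq V_{G'}$.

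The key structural identity I need is $\cover_G(S) = \delta_{G'}(S)$ for every $S \subseteq V_G$. To verify it, observe that a hyperedge $e \cup \{v\}$ of $G'$ is cut by $(S, V_{G'} \setminus S)$ iff it has a vertex on each side; since $v \notin S$ by construction, this is equivalent to $e \cap S \neq \emptyset$, which is exactly the condition that $e$ is covered by $S$ in $G$. Combining this identity with the sparsifier guarantee gives $\delta_H(S) = (1 \pm \epsilon) \cover_G(S)$ for every $k$-subset $S \subseteq V_G$.

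For post-processing, I would enumerate all $\binom{N}{k}$ size-$k$ subsets $S$ of $V_G$ and return one maximizing $\delta_H(S)$. By the previous paragraph this subset is within a $1-\epsilon$ factor of $\opt$ (after the standard rescaling of $\epsilon$). There is no real obstacle: the only small point worth noting is that the sparsifier approximation is required on all cuts $S \subseteq V_{G'}$, whereas we only use the ones with $v \notin S$, so the guarantee is actually stronger than needed; and the bump from rank $d$ to rank $d+1$ is absorbed by $\tO(\cdot)$.
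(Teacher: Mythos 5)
Your proposal is correct and follows essentially the same route as the paper: add a dummy vertex $v$ to every hyperedge to form $G'$, build a streaming $\epsilon$-sparsifier of $G'$, and use the identity $\cover_G(S)=\delta_{G'}(S)$ for $S\subseteq V_G$ to read off all coverages in post-processing. The extra details you supply (the stateless stream rewrite, the rank bump from $d$ to $d+1$, and exhaustive enumeration in post-processing) are exactly what the paper leaves implicit.
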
 

The above theorem assumes unbounded post-processing time. If $k$ is constant, the post-processing will be polynomial. For larger $k$, if we still require polynomial running time then, after constructing the $\epsilon$-sparsifier $H$, we could either use the $(1-(1-1/d)^d)$ approximation algorithm via linear programming \cite{AgeevS99} or the  folklore $(1-1/e)$ approximation greedy algorithm.
\subsection{Algorithm for Near-Regular Hypergraphs}
In this subsection, we show that it is possible to reduce the space used to $\tilde{O}(\epsilon^{-3} dk)$ in the case of hypergraphs that are regular or nearly regular. Define $\kappa \leq 1$ to be the ratio between the smallest degree and the largest degree; for a regular hypergraph $\kappa=1$. 
We show that a $(\kappa-\epsilon)$ approximation is possible using $\tO(\epsilon^{-3}d k)$ space for rank $d$ hypergraphs. This also implies a $(1-\epsilon)$ approximation for regular hypergraphs. Let $t_1$ and $t_2$ be the minimum and maximum degree of a node in $G$.
\begin{theorem} \label{thm:vertex-cover-bounded-2}
There exists a single-pass, $\tO(\epsilon^{-3} d k)$-space algorithm that finds a $(\kappa-\epsilon)$ approximation of \MkVC of hypergraphs of rank $d$ with high probability .
\end{theorem}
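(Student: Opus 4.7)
My plan is to specialize the guess-subsample-verify framework of Section~\ref{sec:magic} to the hypergraph setting, viewing MaxVertexCoverage as an instance of MaxSetCoverage on universe $E_G$ with sets $S_v := \{e \in E_G : v \in e\}$ indexed by vertices $v \in V_G$. Edge-universe subsampling is compatible with the dynamic graph stream model: I will maintain a limited-independence hash function $h : E_G \to \{0,1\}$ with $\Pr[h(e) = 1] = p$, keep each inserted edge with $h(e) = 1$ explicitly in memory (handling deletions symmetrically), and maintain $F_0$ sketches in parallel to estimate the true coverage of candidate solutions at the end of the stream.

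As in Section~\ref{sec:magic}, I will run $O(\log m)$ parallel copies indexed by geometric guesses $v$ of $\opt$, each with subsampling probability $p = \Theta(\epsilon^{-2} k \log N / v)$. Applying Lemma~\ref{lem:approximation-hashing} with the union bound taken over the $\binom{N}{k}$ vertex subsets in place of set collections, for the correct guess $v \in [\opt/2,\opt]$ every $k$-subset $S$ will satisfy $\cover_{G'}(S) = p\,\cover_G(S) \pm O(\epsilon p \opt)$, where $G'$ is the subsampled hypergraph. I will terminate any copy whose sampled edge count exceeds the storage budget; by the space argument below, this cannot kill the correct guess.

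The key technical ingredient will be a lower bound on $\opt$ that exploits near-regularity. Specifically, I plan to show $\opt \geq \Omega(\kappa \cdot \min(k t_2, m))$ by combining the trivial upper bound $\opt \leq k t_2$ with the observation that a uniformly random $k$-subset has expected coverage at least $m\bigl(1 - (1-k/N)^d\bigr) = \Omega\bigl(m \cdot \min(1, kd/N)\bigr)$, together with the identity $m \geq N t_1/d = N\kappa t_2/d$ implied by near-regularity. Plugging into $pm$ will yield $\tO(\epsilon^{-3} k)$ expected sampled hyperedges, and therefore $\tO(\epsilon^{-3} dk)$ bits of storage since each hyperedge costs $O(d \log N)$ bits to record; in parallel the $F_0$ sketches only add $\tO(\epsilon^{-2})$ space per copy.

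Finally, using unbounded post-processing, I will exhaustively solve each surviving subsampled instance, use the $F_0$ sketch to estimate the true coverage of the returned $k$-subset, and output the best across all surviving copies. By the concentration bound applied to the optimum of $G'$, the returned $k$-subset $S$ will satisfy $\cover_G(S) \geq (\kappa - \epsilon)\opt$; the factor $\kappa$ reflects the irreducible slack between $\opt$ and its upper bound $k t_2$ that a hypergraph with degree ratio only $\kappa$ is allowed to exhibit, and collapses to $1-\epsilon$ for regular hypergraphs, matching the remark in the introduction. The hardest step will be establishing the $\Omega(\kappa \min(kt_2, m))$ lower bound on $\opt$ cleanly in both regimes $kd \leq N$ and $kd > N$, and verifying that the $(\kappa - \epsilon)$ guarantee is exactly what propagates through the Chernoff and $F_0$-sketching steps after the subsampling is resolved.
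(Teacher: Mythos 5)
There is a genuine gap, and it is in the space accounting. Your plan is to subsample the edge universe at rate $p = \Theta(\epsilon^{-2}k\log N/v)$ with $v=\Theta(\opt)$ and then store all surviving hyperedges explicitly, so you store $pm = \Theta(\epsilon^{-2}k\log N\cdot m/\opt)$ hyperedges in expectation. For this to be $\tO(\epsilon^{-3}k)$ you need $m/\opt = O(\epsilon^{-1})$, but near-regularity does not give you that: take $d=2$ and let $G$ be a perfect matching on $N$ vertices, so $\kappa=1$, $t_1=t_2=1$, $m=N/2$, and $\opt=k$. Then $m/\opt=N/(2k)$ and the correct guess stores $\tilde\Theta(\epsilon^{-2}N)$ edges, not $\tO(\epsilon^{-3}k)$. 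Your proposed bound $\opt\geq\Omega(\kappa\min(kt_2,m))$ is consistent with this example (both sides equal $k$) but does not control $pm$, because the binding term is $kt_2$, which can be far smaller than $m$. The termination safeguard does not rescue you: in this instance it is precisely the copy with the \emph{correct} guess that blows the budget and gets killed. The framework of Section~\ref{sec:magic} controls the number of \emph{covered} subsampled elements ($\opt'=\tO(\epsilon^{-2}k)$), never the size of the whole subsampled universe, and the set-coverage algorithms are careful never to store that universe. A second warning sign: if your scheme did fit in the budget, exhaustive search on a faithfully subsampled instance plus Corollary~\ref{cor:subsample-opt} would give a $(1-O(\epsilon))$ approximation, not $(\kappa-\epsilon)$ --- your closing remark that ``the factor $\kappa$ reflects the irreducible slack'' is not attached to any step of your argument --- and a $(1-O(\epsilon))$ approximation in $\tO(\epsilon^{-3}dk)$ space would contradict the paper's own lower bound, Theorem~\ref{thm:lower-bound-2}, for $\kappa$ bounded away from $1$.

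The paper's proof takes an entirely different route that makes no use of edge subsampling: it samples $O(\log N)$ uniformly random $k$-subsets of \emph{vertices} before the stream and shows that near-regularity forces a random $k$-subset to be nearly optimal. Writing $L_S(y)=\max(0,|y\cap S|-1)$, one has $\cover_G(S)\geq kt_1-\sum_{y}L_S(y)$, and a hypergeometric computation gives $\E[L_S(y)]\leq\frac{1}{2}(k|y|/N)^2$, so for $N\geq 4kd/\epsilon$ Markov's inequality yields a sample with $\cover_G(S)\geq kt_1-\epsilon kt_2\geq(\kappa-\epsilon)\opt$, using only the trivial bound $\opt\leq kt_2$. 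The only remaining work is estimating the coverage of $O(\log N)$ fixed vertex subsets in a dynamic stream, which is cheap; the case $N<4kd/\epsilon$ falls back to the sparsifier of Theorem~\ref{thm:vertex-cover-general-2}. This is where $\kappa$ genuinely enters, and it is the step your proposal is missing.
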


\begin{proof}
Suppose we uniformly sample a set $S$ of $k$ nodes. Let $L_S(y)=\max(0,|y\cap S|-1)$.  Then the  coverage of $S$ satisfies
\begin{align*}
\cover_G(S)  
= \sum_{y\in E_G} I[S\cap y\neq \emptyset]
= \sum_{y\in E_G} \left ( |S\cap y| - L_S(y) \right )
\geq kt_1 -  \sum_{y \in E_G} L_S(y)
 \ . 
\end{align*}
where the last inequality follows since every node in $S$ covers at least $t_1$ hyperedges. 

Let $\xi_y(j)$ denote the event that $j$ nodes in the hyperedge $y$ are in $S$ and let $|y|$ denote the number of nodes in $y$.
We  have
\begin{align*}
\expec{L_S(y) } & = \sum_{j=1}^{|y|} (j-1) \prob{\xi_y(j)}  = \left( \sum_{j=0}^{|y|}  j \prob{\xi_y(j)} \right)  - 1 + \prob{\xi_y(0)} ~.
& \end{align*}

The sum $\sum_{j=0}^{|y|}  j \prob{\xi_y(j)}$ is the expected value of the hypergeometric distribution and therefore it evaluates to $|y|k/N$. Furthermore, 
\begin{align*}
\prob{\xi_y(0)} = \prod_{i=0}^{k-1} \left( 1-\frac{|y|}{N-i} \right)
 & \leq  \left( 1-\frac{|y|}{N} \right)^{k} \\
 & \leq \exp \left(-\frac{k |y|}{N} \right)  \leq 1-  \frac{k |y|}{N} + \frac{1}{2}\left( \frac{k |y|}{N} \right)^2.
\end{align*}

The last inequality follows from taking the first three terms of the Taylor's expansion. Hence,  
\begin{align*}
\expec{L_S(y) }  \leq \frac{k |y|}{N} -1 + 1-  \frac{k |y|}{N} + \frac{1}{2}\left( \frac{k |y|}{N} \right)^2 = \frac{1}{2}\left( \frac{k |y|}{N} \right)^2. 
\end{align*}

Hence, if $N \geq 4 kd/\epsilon$, then 
\begin{align*}
\sum_{y \in E_G}\expec{L_S(y) }  \leq \frac{1}{2} \sum_{y \in E_G} \left( \frac{k |y|}{N} \right)^2   \leq \frac{1}{2}  d \left( \frac{k}{N} \right)^2 \sum_{y \in E_G} |y| & \leq \frac{1}{2}  d \left( \frac{k}{N} \right)^2 N t_2  \\
& \leq \frac{1}{8}  \epsilon k t_2 ~. 
\end{align*}

By an application of Markov's inequality, \[\prob{{\sum_{y \in E_G}L_S(y) }  \geq \epsilon k t_2} \leq 1/8 \ .\] Thus, if we sample $O(\log N)$ sets of $k$ nodes in parallel, with high probability, there is a sample set $S$ of $k$ nodes satisfying ${\sum_{y \in E_G}L_S(y) }  \leq \epsilon k t_2$ which implies that $\cover_G(S) \geq kt_1 - \epsilon kt_2  \geq (\kappa-\epsilon)\opt$.  If $N \leq 4 kd /\epsilon$, we simply construct the sparsifier of $G'$ as described above to achieve a $1-\epsilon$ approximation.
\end{proof}

%!TEX root = main.tex

\section{Lower Bounds}
In this section, we prove space lower bounds for data stream algorithms that approximate  \MkSC or \MkVC. In particular, these imply that improving over an $(1-1/e)$ approximation of \MkSC with constant passes and constant $k$ requires $\Omega(m)$ space. Recall that, still assuming $k$ is constant, we designed a constant-pass algorithm that returned a $(1-1/e-\epsilon)$ approximation using $\tO(\epsilon^{-2} k)$ space. For constant $k$, we also show that improving over a $\kappa$ approximation (where $\kappa$ is the ratio between the lowest degree and the highest degree) for \MkVC requires $\Omega(N \kappa^3)$ space. Our algorithm returned a $\kappa-\epsilon$ approximation using $\tO(\epsilon^{-3} k)$ space.
\paragraph{Approach.} We prove both bounds by a reduction from $r$-player set-disjointness in communication complexity.   In this problem, there are $r$ players where the  $i$th player has a set $S_{i} \subseteq [u]$. It is promised that exactly one of the following two cases happens.
\begin{itemize}
\item Case 1 (NO instance): All the sets are pairwise disjoint.
\item Case 2 (YES instance): There is a unique element $e \in [u]$ such that $e \in S_i$ for all $i \in [r]$.
\end{itemize} 

The goal of the communication problem is the $r$th player answers whether the input is a YES instance or a NO instance correctly with probability at least 0.9. We shall denote this problem by $\Disj_r(u)$.

%Without loss of generality, we can assume that $| S_1 \cup \cdots \cup S_r | \geq u/4$.
 The communication complexity of the above problem in $p$-round,  one-way model (where each round consists of player 1 sending a message to player 2, then player 2 sending a message to player 3 and so on) is $\Omega(u/r)$ \cite{ChakrabartiKS03} even if the players may use public randomness. This implies that in any randomized communication protocol, the maximum message sent by a player contains $\Omega(u/(pr^2))$ bits. Without loss of generality, we could assume that $|S_1 \cup S_2 \cup \ldots \cup S_r| \geq u/4$ via a padding argument.

\begin{theorem}\label{thm:lower-bound-1}
Assuming $n = \Omega(\epsilon^{-2} k \log m)$, any constant-pass algorithm that finds a $(1+\epsilon)(1-(1-1/k)^k) $ approximation of  \MkSC  with probability at least 0.99 requires $\Omega(m/k^2)$ space even when all the sets have the same size.
\end{theorem}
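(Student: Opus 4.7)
The plan is to reduce from the $k$-party set disjointness problem $\Disj_k(u)$ discussed earlier in the section, for which the excerpt states that in any $p$-round one-way protocol, some player must send a message of $\Omega(u/(p k^2))$ bits. I will arrange the reduction so that $u = \Theta(m)$ and any $p$-pass streaming algorithm of space $S$ that beats a $(1-(1-1/k)^k)(1+\epsilon)$ approximation induces such a protocol with largest message $S$; solving for $S$ with $p = O(1)$ gives the claimed $\Omega(m/k^2)$ bound.

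The core construction uses public randomness to sample, for every label $e \in [u]$, a uniformly random partition of a ground set $U$ of size $n = c\epsilon^{-2} k \log m$ into $k$ blocks $P_e^{(1)},\ldots,P_e^{(k)}$ of equal size $n/k$; these blocks are the sets of the \MkSC instance, and they are all of the same size as the theorem demands. Player $i$ holding input $S_i \subseteq [u]$ streams the sets $\{P_e^{(i)} : e \in S_i\}$ in some order and then hands the algorithm's memory to player $i+1$, so a $p$-pass algorithm becomes a $p$-round one-way protocol. By the padding assumption $|S_1 \cup \cdots \cup S_k| \geq u/4$ quoted in the excerpt, the stream length is $m = \Theta(u)$.

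The YES/NO gap comes from two observations. First, in a YES instance with common element $e^\star$, all $k$ blocks $P_{e^\star}^{(1)},\ldots,P_{e^\star}^{(k)}$ appear in the stream and partition $U$, so $\opt \geq n$. Second, in a NO instance every label $e$ lies in at most one $S_i$, so at most one block per partition appears in the stream, forcing any selection of $k$ sets to come from $k$ distinct partitions; for a fixed such $k$-tuple the expected coverage over the random partitions equals $n(1-(1-1/k)^k)$, and a Chernoff bound yields multiplicative deviation $\epsilon/4$ with failure probability $\exp(-\Omega(\epsilon^2 n))$. A union bound over the at most $(uk)^k$ possible $k$-tuples, combined with the choice $n = \Omega(\epsilon^{-2} k \log m)$, guarantees that with high probability over the public randomness, every NO instance satisfies $\opt \leq n(1-(1-1/k)^k)(1+\epsilon/4)$ simultaneously.

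Conditioning on this high-probability event, a $(1+\epsilon)(1-(1-1/k)^k)$-approximation returns at least $(1+\epsilon)(1-(1-1/k)^k) n$ in YES and at most $(1+\epsilon/4)(1-(1-1/k)^k) n$ in NO, so thresholding between these values decides $\Disj_k$ with probability $0.99$. The communication lower bound then gives $S = \Omega(u/(pk^2)) = \Omega(m/k^2)$ for constant $p$. The delicate step I expect to be the main obstacle is the uniform concentration: one must pick $n$ just large enough to union-bound over $(uk)^k$ candidate $k$-tuples of blocks while still respecting the bound $n = \Omega(\epsilon^{-2} k \log m)$ stipulated by the theorem, which is exactly where the parameter choice $n = \Theta(\epsilon^{-2} k \log m)$ is tight and where the hypothesis on $n$ is used.
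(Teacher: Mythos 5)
Your proposal is correct and follows essentially the same route as the paper: a reduction from $k$-party set disjointness using public-coin random equal-size $k$-partitions of $[n]$ as the streamed sets, with the YES/NO coverage gap ($n$ versus $(1+O(\epsilon))(1-(1-1/k)^k)n$) established by concentration plus a union bound over all $k$-tuples, and the space bound extracted from the $\Omega(u/r)$ communication lower bound. The one detail to be careful with is that in a NO instance the per-element coverage indicators are not independent (the blocks are fixed-size, without-replacement pieces of partitions), so the concentration step needs a Chernoff bound for negatively correlated Boolean variables, which is exactly what the paper invokes via Panconesi--Srinivasan.
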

\begin{proof}
 Our proof is a reduction from $\Disj_k(m)$.
Consider a sufficiently large $n$ where  $k$ divides $n$. For each $i \in [m]$, let $\mathcal{P}_i$ be a random partition of $[n]$ into $k$ sets $V^i_1,\ldots,V^i_k$ of equal size. Each partition is chosen independently and the players agree on these partitions using public randomness before receiving the input. 

For each player $j$, if $i \in S_j$, then she puts $V^i_j$ in the stream. According to the aforementioned assumption, the stream consists of $\Theta(m)$ sets.

If the input is a NO instance, then for each $i \in [m]$, there is at most one set $V^i_j$ in the stream. Hence, the stream consists of independent random sets of size $n/k$. Therefore, for each $e \in [n]$ and any $k$ sets $V^{i_1}_{j_1},\ldots, V^{i_k}_{j_k}$ in the stream,
$
\prob{e \in V^{i_1}_{j_1} \cup \ldots \cup V^{i_k}_{j_k}} = 1 - (1-1/k)^k.
$
By an application of Chernoff bound for negatively correlated boolean random variables \cite{PanconesiS97}, 
\begin{align*}
 & \prob{\left| | V^{i_1}_{j_1} \cup \ldots \cup V^{i_k}_{j_k}| -  \left(1- \left(1-\frac{1}{k} \right)^k \right) n \right| > \epsilon \left(1- \left(1-\frac{1}{k}\right)^k \right) n  }  \\
 &\leq 3 \expo{-\epsilon^2 \left(1-\left(1-\frac{1}{k} \right)^k\right) \frac{n}{3}}  \\
 & \leq  3 \expo{-\epsilon^2 (1-{1}/{e}) n/3} \leq  \frac{1}{m^{10+k}} ~.
\end{align*}

The last inequality holds when $n$ is a sufficiently large multiple of $k \epsilon^{-2} \log m$. Therefore, the maximum coverage in this case is at most $(1+\epsilon)(1-(1-1/k)^k)n$ with probability at least $1-1/m^{10}$ by taking the union bound over all ${m \choose k} \leq m^k$ possible $k$ sets.

If the input is a YES instance, then clearly, the maximum coverage is $n$. This is because  there exists $i \in [m]$ such that $i \in S_1 \cap \ldots \cap S_k$ and  therefore $V^i_1,\ldots,V^i_k$ are in the stream.

Therefore, any constant pass and $O(s)$-space algorithm that finds a $(1+2\epsilon)(1-(1-1/k)^k)$ approximation of the maximum coverage with probability at least 0.99 implies a protocol to solve the $k$-party disjointness problem using $O(s)$ bits of communication. Thus, $s = \Omega(m/k^2)$ as required. 
\end{proof}

Consider the sets $S_1,\ldots,S_r \subseteq [u]$ that satisfy the unique intersection promise as in $\Disj_r(u)$. Let $X$ be the $r$ by $u$ matrix in which the row $X_{i}$ is the characteristic vector of $S_i$. Suppose there are $r' = \Omega(r^2)$ players. Chakrabarti \etal \cite{ChakrabartiCM11} showed that if each entry of $X$ is given to a unique player and the order in which the entries are given to the players is random, then the players need to use $\Omega(u/r)$ bits of communication to tell whether the sets is a YES instance or a NO instance  with probability at least 0.9. Thus, in any randomized protocol, the maximum message sent by a player contains $\Omega(u/r^3)$ bits. Hence, using the same reduction and assuming constant $k$, we show that the same lower bound holds even for random order stream.

\begin{theorem}\label{thm:lower-bound-3}
Assuming $n = \Omega(\epsilon^{-2} k \log m)$, any constant-pass algorithm that finds a $(1+\epsilon)(1-(1-1/k)^k) $ approximation of  \MkSC  with probability at least 0.99 requires $\Omega(m/k^3)$ space even when all the sets have the same size and arrive in random order.
\end{theorem}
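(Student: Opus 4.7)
The plan is to mimic the reduction in the proof of Theorem~\ref{thm:lower-bound-1} but plug in the random-order multi-party disjointness bound of~\cite{ChakrabartiCM11} quoted just above the theorem statement, in place of the worst-case one-way lower bound of~\cite{ChakrabartiKS03}. The combinatorial gadget---for each $i \in [m]$ a random partition $\mathcal{P}_i=(V^i_1,\ldots,V^i_k)$ of $[n]$ into $k$ equal-sized blocks, shared via public coins---is left unchanged. Since the \emph{multiset} of sets appearing in the induced stream has exactly the same joint distribution as in Theorem~\ref{thm:lower-bound-1}, the coverage analysis transfers verbatim: a YES instance has maximum $k$-coverage $n$, while a NO instance has maximum $k$-coverage at most $(1+\epsilon)(1-(1-1/k)^k)n$ with probability at least $1-m^{-10}$ by the same Chernoff plus union-bound argument over all $\binom{m}{k}\leq m^k$ collections of $k$ sets.

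The substantive step is recasting this as a reduction to a random-\emph{order} stream. Given the instance $S_1,\ldots,S_k\subseteq[m]$, form the $k\times m$ characteristic matrix $X$ and invoke the CCM model: $r'=\Theta(k^2)$ players collectively hold the entries of $X$ and speak in a uniformly random order. Whenever a player's entry $X_{ji}$ equals $1$, she appends the single set $V^i_j$ to the stream; otherwise she appends nothing. The key observation is that restricting a uniform random permutation of the $km$ positions of $X$ to the sub-collection of $1$-entries again yields a uniform random permutation of those $1$-entries, by the standard symmetry of permutations under restriction to a fixed index set. Consequently the induced stream of sets is itself in uniformly random order, which is precisely what the theorem demands.

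The simulation of a $p$-pass, $s$-space random-order streaming algorithm is then routine: each player forwards the current algorithmic memory as her message and updates it if she contributes a set. This produces a one-way protocol with at most $r'p=\Theta(k^2p)$ messages of size at most $s$ bits each, so the CCM maximum-message lower bound $\Omega(m/k^3)$ forces $s=\Omega(m/k^3)$ for constant $p$. The main subtlety is the random-order claim in the previous paragraph: one must check that the order of $1$-entries is uniform \emph{even though} the number of $1$-entries (hence the stream length) depends on the instance. This is exactly the restriction-of-uniform-permutations symmetry above, but it warrants stating explicitly, since otherwise a reader might worry that conditioning on the instance spoils the randomness. Everything else---the choice of $n = \Omega(\epsilon^{-2}k\log m)$ to absorb the Chernoff error, the equal-size property of the sets $V^i_j$, and the final amplification from the CCM success probability $0.9$ to $0.99$---is inherited directly from Theorem~\ref{thm:lower-bound-1}.
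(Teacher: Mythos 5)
Your proposal is correct and follows essentially the same route as the paper: the paper obtains Theorem~\ref{thm:lower-bound-3} by reusing the reduction of Theorem~\ref{thm:lower-bound-1} unchanged and simply substituting the random-order multi-party disjointness bound of \cite{ChakrabartiCM11} (with $\Theta(k^2)$ players, hence maximum message size $\Omega(m/k^3)$) for the one-way bound of \cite{ChakrabartiKS03}. Your explicit check that restricting a uniform ordering of the entries of $X$ to its $1$-entries induces a uniformly ordered stream is a detail the paper leaves implicit, and it is precisely the right point to verify.
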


Next, we prove a lower bound for the $k$-vertex coverage problem for graphs where the ratio between the minimum degree and the maximum degree is at least $\kappa$. We show that for constant $k$, beating $\kappa$ approximation for constant $\kappa$ requires  $\Omega(N)$ space.

Since $\kappa$ can be smaller than any constant, this also establishes that $\Omega(N)$ space is required for any constant approximation of \MkVC.

\begin{theorem} \label{thm:lower-bound-2}
For $\epsilon>0$, any constant-pass algorithm that finds a $(\kappa+\epsilon)$ approximation of  \MkVC with probability at least 0.99 requires $\Omega(N\kappa^3/k)$ space.
\end{theorem}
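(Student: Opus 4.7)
The plan is to mimic the structure of Theorem~\ref{thm:lower-bound-1} and reduce from $r$-party set-disjointness $\Disj_r(u)$, but engineered so that the ratio $\opt_{\text{NO}}/\opt_{\text{YES}}$ of the constructed graph is $\Theta(\kappa)$ rather than constant. The key choice is to take $r = \Theta(1/\kappa)$ so that a single player's contribution to a centre's degree in the NO case is only a $1/r = \Theta(\kappa)$ fraction of the contribution of all $r$ players in the YES case. Combined with the per-player message lower bound $\Omega(u/(pr^2))$ recalled before Theorem~\ref{thm:lower-bound-1} for $p$-pass $\Disj_r$, this yields a space lower bound of $\Omega(u\kappa^2)$, which we match to the target $\Omega(N\kappa^3/k)$ by setting $u = \Theta(N\kappa/k)$.

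For each $i \in [u]$, introduce $k$ centres $c^i_1,\ldots,c^i_k$ and a private leaf pool $L_i$ of size $D = \Theta(k/\kappa)$, so the total vertex count is $N = \Theta(u(k+D)) = \Theta(uk/\kappa)$ as desired. Using public randomness, partition each $L_i$ into $r$ blocks $T^i_1,\ldots,T^i_r$ of size $\kappa D = \Theta(k)$. For every $(i,j)$ with $i\in S_j$, player $j$ inserts the edges $\{(c^i_t,\ell) : t\in[k],\,\ell\in T^i_j\}$ into the stream---i.e.\ she connects \emph{all} $k$ centres of $i$ to every leaf of her block. A fixed ``base'' subgraph independent of the disjointness input is also appended unconditionally, designed so that every vertex attains degree $\ge \kappa D$ (needed to satisfy the graph's $\kappa$ min-to-max degree ratio). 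In the YES case the unique common element $i^*$ has all its blocks contributed, so the $k$-vertex set $\{c^{i^*}_1,\ldots,c^{i^*}_k\}$ covers all $kD = \Theta(k^2/\kappa)$ centre--leaf edges incident to $L_{i^*}$; hence $\opt_{\text{YES}} = \Omega(k^2/\kappa)$. In the NO case every centre receives edges from at most one block of size $\kappa D$, so any $k$-vertex selection covers only $O(k^2)$ centre--leaf edges, giving $\opt_{\text{NO}} = O(k^2)$. A $(\kappa+\epsilon)$-approximation therefore distinguishes the two cases with the prescribed success probability, and the $\Disj_r$ communication lower bound yields the claimed space.

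The main obstacle will be controlling the effect of the base subgraph. To enforce the degree ratio, every vertex needs $\Omega(\kappa D) = \Omega(k)$ extra incident edges, and a naive base choice would add $\Theta(k^2)$ to the coverage of every $k$-vertex set, inflating $\opt_{\text{NO}}$ by a constant factor and only proving hardness for a $c\kappa$-approximation for some constant $c > 1$. To obtain the tight $\kappa + \epsilon$ threshold I would use a coverage-neutral base---e.g., a disjoint union of small cliques or a bipartite expander designed so that any $k$ vertices selected from the base share most of their incident base-edges---so that the base contribution is lower-order in $k^2/\kappa$ and the constant can be absorbed into the $\epsilon$-slack. A secondary technicality is that $\Disj_r$'s YES promise only guarantees one common element (other elements may be shared by several players), so to control the degrees of centres $c^i_t$ for $i \neq i^*$ I would first reduce to the ``unique-intersection'' variant via the standard padding argument implicit in the proof of Theorem~\ref{thm:lower-bound-1}.
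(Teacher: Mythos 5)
Your reduction is essentially the paper's: both reduce from $r$-party set disjointness with $r=\Theta(1/\kappa)$, build per-element ``star'' gadgets so that the YES instance produces a vertex of degree $\Theta(1/\kappa)$ times larger than anything achievable in the NO instance, and then invoke the $\Omega(u/(pr^2))$ per-message bound to get $\Omega(N\kappa^3/k)$. The only structural difference is that you handle general $k$ with a single integrated gadget ($k$ centres sharing a leaf pool of size $k/\kappa$ split into blocks of size $k$ per player), whereas the paper proves the $k=1$ case with a bare star gadget ($u_i$ joined to $v_{i,j}$ iff $i\in S_j$, giving $\opt_{\mathrm{YES}}=1/\kappa$ versus $\opt_{\mathrm{NO}}=1$) and then takes $k$ disjoint copies; both yield the same parameters.

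The one place your writeup does not close is the ``base subgraph,'' which you correctly identify as threatening to inflate $\opt_{\mathrm{NO}}$ by a constant factor and leave only a $c\kappa$ hardness --- but it is not needed at all. The paper simply adopts the convention that a node exists only if the stream contains an edge incident to it, so isolated centres and leaves do not enter the degree-ratio condition. Under that convention your construction already satisfies the requirement with no padding: every surviving leaf has degree exactly $k=\kappa D$ (its block is inserted atomically against all $k$ centres), and every surviving centre has degree a positive multiple of $\kappa D$ bounded above by $D$, so all degrees lie in $[\kappa D, D]$ and the min-to-max ratio is at least $\kappa$ in both the YES and NO cases. Deleting the base subgraph therefore removes the only unresolved step, and the clean gap $\opt_{\mathrm{NO}}=k^2$ versus $\opt_{\mathrm{YES}}=k^2/\kappa$ gives the $(\kappa+\epsilon)$ threshold exactly as you computed. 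Your secondary worry about non-unique intersections outside the promised element is likewise harmless: extra shared elements only raise some centre degrees toward (but not above) $D$ and cannot raise $\opt_{\mathrm{NO}}$, since they occur only in YES instances.
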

\begin{proof}
Initially, assume $k=1$. We consider the multi-party set disjointness problem $\Disj_t(N')$ where $t=1/\kappa$ and $N'=N/t$. Here, there are $t$ players and the input sets are subsets of $[N']$. We consider a bipartite graph where the set of possible nodes are $L\cup R$ where $L=\{u_i\}_{i\in [N']}$ and $R=\{v_{i,j}\}_{i\in [N'], j\in [t]}$. Note that this graph has $(t+1)N'=\Theta(N)$ nodes. However we only consider a node to exist if the stream contains an edge incident to that node. 

The $j$-th player defines a set of edges on this graph based on their set $S_j$ as follows. If $i\in S_j$, she puts the edge between $u_i$ and $v_{i,j}$. If $S_1, \ldots, S_t$ is a YES instance, then there must be a node $u_i$ that has degree $t$. If $A$ is a NO instance, then every node in the graph has degree at most 1. Hence the ratio of minimum degree to maximum degree is at least $1/t=\kappa$ as required.

Thus, for $k=1$, a $1/t$ approximation with probability at least 0.99 on a graph of $N$ nodes implies a protocol to solve $\Disj_t(N')$. Therefore, the algorithm requires $\Omega(N\kappa^3)$ space. For general $k$, we make $k$ copies of the above construction to deduce the lower bound $\Omega(N\kappa^3/k)$. 
\end{proof}

\paragraph*{Acknowledgements.} We thank Sagar Kale for discussions of related work.

\bibliographystyle{plain} 
%\begin{thebibliography}{10}
%\input{main.bbl}
%\end{thebibliography}
\bibliography{coverage}

\begin{thebibliography}{10}

\bibitem{AgeevS99}
Alexander~A. Ageev and Maxim Sviridenko.
\newblock Approximation algorithms for maximum coverage and max cut with given
  sizes of parts.
\newblock In {\em {IPCO}}, volume 1610 of {\em Lecture Notes in Computer
  Science}, pages 17--30. Springer, 1999.

\bibitem{AgeevS04}
Alexander~A. Ageev and Maxim Sviridenko.
\newblock Pipage rounding: {A} new method of constructing algorithms with
  proven performance guarantee.
\newblock {\em J. Comb. Optim.}, 8(3):307--328, 2004.

\bibitem{AhnCGMW15}
Kook~Jin Ahn, Graham Cormode, Sudipto Guha, Andrew McGregor, and Anthony Wirth.
\newblock Correlation clustering in data streams.
\newblock In {\em {ICML}}, volume~37 of {\em {JMLR} Workshop and Conference
  Proceedings}, pages 2237--2246. JMLR.org, 2015.

\bibitem{AhnGM12a}
Kook~Jin Ahn, Sudipto Guha, and Andrew McGregor.
\newblock Analyzing graph structure via linear measurements.
\newblock In {\em {SODA}}, pages 459--467. {SIAM}, 2012.

\bibitem{AhnGM12b}
Kook~Jin Ahn, Sudipto Guha, and Andrew McGregor.
\newblock Graph sketches: sparsification, spanners, and subgraphs.
\newblock In {\em {PODS}}, pages 5--14. {ACM}, 2012.

\bibitem{AhnGM13}
Kook~Jin Ahn, Sudipto Guha, and Andrew McGregor.
\newblock Spectral sparsification in dynamic graph streams.
\newblock In {\em {APPROX-RANDOM}}, volume 8096 of {\em Lecture Notes in
  Computer Science}, pages 1--10. Springer, 2013.

\bibitem{Anagnostopoulos15}
Aris Anagnostopoulos, Luca Becchetti, Ilaria Bordino, Stefano Leonardi, Ida
  Mele, and Piotr Sankowski.
\newblock Stochastic query covering for fast approximate document retrieval.
\newblock {\em {ACM} Trans. Inf. Syst.}, 33(3):11:1--11:35, 2015.

\bibitem{Assadi17}
Sepehr Assadi.
\newblock Tight space-approximation tradeoff for the multi-pass streaming set
  cover problem.
\newblock In {\em {PODS}}, pages 321--335. {ACM}, 2017.

\bibitem{AssadiKL16}
Sepehr Assadi, Sanjeev Khanna, and Yang Li.
\newblock Tight bounds for single-pass streaming complexity of the set cover
  problem.
\newblock In {\em {STOC}}, pages 698--711. {ACM}, 2016.

\bibitem{AssadiKLY16}
Sepehr Assadi, Sanjeev Khanna, Yang Li, and Grigory Yaroslavtsev.
\newblock Maximum matchings in dynamic graph streams and the simultaneous
  communication model.
\newblock In {\em {SODA}}, pages 1345--1364. {SIAM}, 2016.

\bibitem{AusielloBGLP12}
Giorgio Ausiello, Nicolas Boria, Aristotelis Giannakos, Giorgio Lucarelli, and
  Vangelis~Th. Paschos.
\newblock Online maximum k-coverage.
\newblock {\em Discrete Applied Mathematics}, 160(13-14):1901--1913, 2012.

\bibitem{BadanidiyuruMKK14}
Ashwinkumar Badanidiyuru, Baharan Mirzasoleiman, Amin Karbasi, and Andreas
  Krause.
\newblock Streaming submodular maximization: massive data summarization on the
  fly.
\newblock In {\em {KDD}}, pages 671--680. {ACM}, 2014.

\bibitem{BadanidiyuruV14}
Ashwinkumar Badanidiyuru and Jan Vondr{\'{a}}k.
\newblock Fast algorithms for maximizing submodular functions.
\newblock In {\em {SODA}}, pages 1497--1514. {SIAM}, 2014.

\bibitem{BateniEM16}
MohammadHossein Bateni, Hossein Esfandiari, and Vahab~S. Mirrokni.
\newblock Almost optimal streaming algorithms for coverage problems.
\newblock {\em CoRR}, abs/1610.08096, 2016.

\bibitem{BhattacharyaHNT15}
Sayan Bhattacharya, Monika Henzinger, Danupon Nanongkai, and Charalampos~E.
  Tsourakakis.
\newblock Space- and time-efficient algorithm for maintaining dense subgraphs
  on one-pass dynamic streams.
\newblock In {\em {STOC}}, pages 173--182. {ACM}, 2015.

\bibitem{BonnetEPS16}
{\'{E}}douard Bonnet, Bruno Escoffier, Vangelis~Th. Paschos, and Georgios
  Stamoulis.
\newblock A 0.821-ratio purely combinatorial algorithm for maximum k-vertex
  cover in bipartite graphs.
\newblock In {\em {LATIN}}, volume 9644 of {\em Lecture Notes in Computer
  Science}, pages 235--248. Springer, 2016.

\bibitem{CaskurluMPS14}
Bugra Caskurlu, Vahan Mkrtchyan, Ojas Parekh, and K.~Subramani.
\newblock On partial vertex cover and budgeted maximum coverage problems in
  bipartite graphs.
\newblock In {\em {IFIP} {TCS}}, volume 8705 of {\em Lecture Notes in Computer
  Science}, pages 13--26. Springer, 2014.

\bibitem{ChakrabartiCM11}
Amit Chakrabarti, Graham Cormode, and Andrew McGregor.
\newblock Robust lower bounds for communication and stream computation.
\newblock {\em Electronic Colloquium on Computational Complexity {(ECCC)}},
  18:62, 2011.

\bibitem{ChakrabartiK14}
Amit Chakrabarti and Sagar Kale.
\newblock Submodular maximization meets streaming: Matchings, matroids, and
  more.
\newblock In {\em {IPCO}}, volume 8494 of {\em Lecture Notes in Computer
  Science}, pages 210--221. Springer, 2014.

\bibitem{ChakrabartiKS03}
Amit Chakrabarti, Subhash Khot, and Xiaodong Sun.
\newblock Near-optimal lower bounds on the multi-party communication complexity
  of set disjointness.
\newblock In {\em {IEEE} Conference on Computational Complexity}, pages
  107--117. {IEEE} Computer Society, 2003.

\bibitem{ChakrabartiW16}
Amit Chakrabarti and Anthony Wirth.
\newblock Incidence geometries and the pass complexity of semi-streaming set
  cover.
\newblock In {\em {SODA}}, pages 1365--1373. {SIAM}, 2016.

\bibitem{ChekuriGQ15}
Chandra Chekuri, Shalmoli Gupta, and Kent Quanrud.
\newblock Streaming algorithms for submodular function maximization.
\newblock In {\em {ICALP} {(1)}}, volume 9134 of {\em Lecture Notes in Computer
  Science}, pages 318--330. Springer, 2015.

\bibitem{ChekuriK04}
Chandra Chekuri and Amit Kumar.
\newblock Maximum coverage problem with group budget constraints and
  applications.
\newblock In {\em {APPROX-RANDOM}}, volume 3122 of {\em Lecture Notes in
  Computer Science}, pages 72--83. Springer, 2004.

\bibitem{ChitnisCEHMMV16}
Rajesh Chitnis, Graham Cormode, Hossein Esfandiari, MohammadTaghi Hajiaghayi,
  Andrew McGregor, Morteza Monemizadeh, and Sofya Vorotnikova.
\newblock Kernelization via sampling with applications to finding matchings and
  related problems in dynamic graph streams.
\newblock In {\em {SODA}}, pages 1326--1344. {SIAM}, 2016.

\bibitem{CormodeDIM03}
Graham Cormode, Mayur Datar, Piotr Indyk, and S.~Muthukrishnan.
\newblock Comparing data streams using hamming norms (how to zero in).
\newblock {\em {IEEE} Trans. Knowl. Data Eng.}, 15(3):529--540, 2003.

\bibitem{CormodeKW10}
Graham Cormode, Howard~J. Karloff, and Anthony Wirth.
\newblock Set cover algorithms for very large datasets.
\newblock In {\em {CIKM}}, pages 479--488. {ACM}, 2010.

\bibitem{EmekR14}
Yuval Emek and Adi Ros{\'{e}}n.
\newblock Semi-streaming set cover - (extended abstract).
\newblock In {\em {ICALP} {(1)}}, volume 8572 of {\em Lecture Notes in Computer
  Science}, pages 453--464. Springer, 2014.

\bibitem{Feige98}
Uriel Feige.
\newblock A threshold of ln \emph{n} for approximating set cover.
\newblock {\em J. {ACM}}, 45(4):634--652, 1998.

\bibitem{GuhaMT15}
Sudipto Guha, Andrew McGregor, and David Tench.
\newblock Vertex and hyperedge connectivity in dynamic graph streams.
\newblock In {\em {PODS}}, pages 241--247. {ACM}, 2015.

\bibitem{Har-PeledIMV16}
Sariel Har{-}Peled, Piotr Indyk, Sepideh Mahabadi, and Ali Vakilian.
\newblock Towards tight bounds for the streaming set cover problem.
\newblock In {\em {PODS}}, pages 371--383. {ACM}, 2016.

\bibitem{KapralovLMMS14}
Michael Kapralov, Yin~Tat Lee, Cameron Musco, Christopher Musco, and Aaron
  Sidford.
\newblock Single pass spectral sparsification in dynamic streams.
\newblock In {\em {FOCS}}, pages 561--570. {IEEE} Computer Society, 2014.

\bibitem{KapralovW14}
Michael Kapralov and David~P. Woodruff.
\newblock Spanners and sparsifiers in dynamic streams.
\newblock In {\em {PODC}}, pages 272--281. {ACM}, 2014.

\bibitem{KempeKT15}
David Kempe, Jon~M. Kleinberg, and {\'{E}}va Tardos.
\newblock Maximizing the spread of influence through a social network.
\newblock {\em Theory of Computing}, 11:105--147, 2015.

\bibitem{KhullerMN99}
Samir Khuller, Anna Moss, and Joseph Naor.
\newblock The budgeted maximum coverage problem.
\newblock {\em Inf. Process. Lett.}, 70(1):39--45, 1999.

\bibitem{KoganK14}
Dmitry Kogan and Robert Krauthgamer.
\newblock Sketching cuts in graphs and hypergraphs.
\newblock In {\em 6th Innovations in Theoretical Computer Science}, 2015.

\bibitem{Konrad15}
Christian Konrad.
\newblock Maximum matching in turnstile streams.
\newblock In {\em {ESA}}, volume 9294 of {\em Lecture Notes in Computer
  Science}, pages 840--852. Springer, 2015.

\bibitem{KrauseG07}
Andreas Krause and Carlos Guestrin.
\newblock Near-optimal observation selection using submodular functions.
\newblock In {\em {AAAI}}, pages 1650--1654. {AAAI} Press, 2007.

\bibitem{KumarMVV15}
Ravi Kumar, Benjamin Moseley, Sergei Vassilvitskii, and Andrea Vattani.
\newblock Fast greedy algorithms in mapreduce and streaming.
\newblock {\em {TOPC}}, 2(3):14, 2015.

\bibitem{McGregor14}
Andrew McGregor.
\newblock Graph stream algorithms: a survey.
\newblock {\em {SIGMOD} Record}, 43(1):9--20, 2014.

\bibitem{McGregorTVV15}
Andrew McGregor, David Tench, Sofya Vorotnikova, and Hoa~T. Vu.
\newblock Densest subgraph in dynamic graph streams.
\newblock In {\em {MFCS} {(2)}}, volume 9235 of {\em Lecture Notes in Computer
  Science}, pages 472--482. Springer, 2015.

\bibitem{McGregorVV16}
Andrew McGregor, Sofya Vorotnikova, and Hoa~T. Vu.
\newblock Better algorithms for counting triangles in data streams.
\newblock In {\em {PODS}}, pages 401--411. {ACM}, 2016.

\bibitem{V17}
Andrew McGregor and Hoa~T. Vu.
\newblock Better streaming algorithms for the maximum coverage problem.
\newblock In {\em {ICDT}}, volume~68 of {\em LIPIcs}, pages 22:1--22:18.
  Schloss Dagstuhl - Leibniz-Zentrum fuer Informatik, 2017.

\bibitem{PanconesiS97}
Alessandro Panconesi and Aravind Srinivasan.
\newblock Randomized distributed edge coloring via an extension of the
  chernoff-hoeffding bounds.
\newblock {\em {SIAM} J. Comput.}, 26(2):350--368, 1997.

\bibitem{RadhakrishnanS11}
Jaikumar Radhakrishnan and Saswata Shannigrahi.
\newblock Streaming algorithms for 2-coloring uniform hypergraphs.
\newblock In {\em Algorithms and Data Structures - 12th International
  Symposium, {WADS} 2011, New York, NY, USA, August 15-17, 2011. Proceedings},
  pages 667--678, 2011.

\bibitem{SahaG09}
Barna Saha and Lise Getoor.
\newblock On maximum coverage in the streaming model {\&} application to
  multi-topic blog-watch.
\newblock In {\em {SDM}}, pages 697--708. {SIAM}, 2009.

\bibitem{SchmidtSS95}
Jeanette~P. Schmidt, Alan Siegel, and Aravind Srinivasan.
\newblock Chernoff-hoeffding bounds for applications with limited independence.
\newblock {\em {SIAM} J. Discrete Math.}, 8(2):223--250, 1995.

\bibitem{SpielmanT11}
Daniel~A. Spielman and Shang{-}Hua Teng.
\newblock Spectral sparsification of graphs.
\newblock {\em {SIAM} J. Comput.}, 40(4):981--1025, 2011.

\bibitem{Srinivasan01}
Aravind Srinivasan.
\newblock Distributions on level-sets with applications to approximation
  algorithms.
\newblock In {\em {FOCS}}, pages 588--597. {IEEE} Computer Society, 2001.

\bibitem{Sun13}
He~Sun.
\newblock Counting hypergraphs in data streams.
\newblock {\em CoRR}, abs/1304.7456, 2013.

\bibitem{YuY13}
Huiwen Yu and Dayu Yuan.
\newblock Set coverage problems in a one-pass data stream.
\newblock In {\em {SDM}}, pages 758--766. {SIAM}, 2013.

\end{thebibliography}

\end{document}